\renewcommand{\setminus}{\mysetminus}
\newenvironment{aw}{\noindent\color{magenta} AW : }{}
\newcommand{\mysetminusD}{\raisebox{.8pt}{\hbox{\tikz{\draw[line width=0.6pt,line cap=round] (3.5pt,0pt) -- (0,5.2pt);}}}}
\newcommand{\mysetminusT}{\mysetminusD}
\newcommand{\mysetminusS}{\raisebox{.5pt}{\hbox{\tikz{\draw[line width=0.45pt,line cap=round] (2.2pt,0) -- (0,3.8pt);}}}}
\newcommand{\mysetminusSS}{\raisebox{.35pt}{\hbox{\tikz{\draw[line width=0.4pt,line cap=round] (1.5pt,0) -- (0,2.8pt);}}}}
\newcommand{\mysetminus}{\mathbin{\mathchoice{\mysetminusD}{\mysetminusT}{\mysetminusS}{\mysetminusSS}}}
\newcommand{\nilRes}[1]{\gamma_{\omega}(#1)}
\newcommand{\Nleq}{\mathrel{\trianglelefteqslant}}
\newcommand{\Nle}{\mathrel{\lhd}}
\newcommand{\Ncl}[1]{\left<\!\left< \mathinner{#1} \right>\!\right>}
\newcommand{\set}[2]{\left\{\, \mathinner{#1}\vphantom{#2}\: \left|\: \vphantom{#1}\mathinner{#2} \right.\,\right\}}
\newcommand{\oneset}[1]{\left\{\, \mathinner{#1} \,\right\}}
\newcommand{\interval}[2]{[ \mathinner{#1}..\mathinner{#2}] }
\newcommand{\smallset}[1]{\left\{\mathinner{#1}\right\}}
\newcommand{\abs}[1]{\left|\mathinner{#1}\right|}
\newcommand{\ceil}[1]{\left\lceil\mathinner{#1} \right\rceil}
\newcommand{\gen}[1]{\left< \mathinner{#1} \right>}
\newcommand{\genr}[2]{\left< \, \mathinner{#1}\; \middle|\;\mathinner{#2} \, \right>}
\newcommand{\Fit}{\operatorname{Fit}}
\newcommand{\FitL}{\operatorname{FitLen}}
\newcommand{\mcomm}[3]{\left[{#1},\kern.1em_{#2}\,\kern.1em {#3} \right]}
\newcommand{\smcomm}[2]{\left[_{#1}\,\kern.1em {#2} \right]}
\newcommand{\N}{\ensuremath{\mathbb{N}}}
\newcommand{\Z}{\ensuremath{\mathbb{Z}}}
\newcommand{\NP}{\ensuremath{\mathsf{NP}}\xspace} %
\newcommand{\coNP}{\ensuremath{\mathsf{coNP}}\xspace}
\newcommand{\TC}{\ensuremath{\mathsf{TC}^0}\xspace}
\newcommand{\CC}{\ensuremath{\mathsf{CC}^0}\xspace}
\renewcommand{\P}{\ensuremath{\mathsf{P}}\xspace}
\renewcommand{\phi}{\varphi}
\newcommand{\Oh}{\mathcal{O}}
\newcommand{\cU}{\mathcal{U}}
\newcommand{\cX}{\mathcal{X}}
\newcommand{\SAT}{\textsc{3SAT}\xspace}
\newcommand{\KColoring}[1]{\ensuremath{#1}\textsc{-Coloring}\xspace}
\newcommand{\sse}{\subseteq}
\newcommand\ie{i.e., }
\newcommand\eg{e.g.\xspace}
\newcommand{\m}[1]{\uppercase{#1}}
\newcommand{\po}[1]{{\mathbf{#1}}}              
\renewcommand{\o}[1]{\overline {#1}}            
\newcommand{\el}{\ell}                          
\renewcommand{\leq}{\leqslant}
\renewcommand{\geq}{\geqslant}
\renewcommand{\vec}{\o}
\newcommand{\polsat}[1]{\ensuremath{\operatorname{\textsc{{PolSat}}}
		\ifthenelse{\equal{#1}{}}{}{\!\left( {\m #1} \right)}}}
\newcommand{\poleqv}[1]{\ensuremath{\operatorname{\textsc{{PolEqv}}}
		\ifthenelse{\equal{#1}{}}{}{\!\left( {\m #1} \right)}}}
\newcommand{\card}[1]{\left| #1 \right|}
\newcommand{\npc}{\NP-complete\xspace}
\newcommand{\conpc}{\coNP-complete\xspace}
\newcommand{\np}{\NP}
\newcommand{\conp}{\coNP}
\newcommand{\ptime}{\P}
\newcounter{senumi}[section]
\newcounter{senumip}[section]
\newcounter{temp}[section]
\def\thesenumi{\thesection.\arabic{senumip}}
\def\p@senumip\thesenumip{\thesenumi}
\newenvironment{senumerate}%
    {\begin{list}%
        {(\thesenumi)}%
        {\usecounter{senumip}}
        \setcounter{senumip}{\value{temp}}
    }%
    {\setcounter{temp}{\value{senumip}}
     \end{list}}
\newcommand{\sref}[1]{(\thesection.\ref{#1})}
\newcommand{\myqed}{\hfill$\Box$}
\newcommand{\polyD}{\po{p}}
\newcommand{\polyA}{\po{s}}
\newcommand{\polyB}{\po{r}}
\newcommand{\polySAT}{\po{s}}
\newcommand{\polyAND}{\po{r}}
\begin{document}

\title{Equation satisfiability in solvable groups\thanks{The research of the first three authors was partially supported by Polish NCN Grant no 2014/14/A/ST6/00138.
The fourth author has been funded by DFG project DI 435/7-1.}
}


\author{Pawe\l{}~Idziak \and
	Piotr~Kawa\l{}ek \and
	Jacek~Krzaczkowski \and
	Armin~Weiß
}


\institute{Pawe\l{} Idziak \at
		Jagiellonian University, Faculty of Mathematics and Computer Science, \\
        Theoretical Computer Science Department \\
		\email{pawel.idziak@uj.edu.pl}           
	\and
		Piotr Kawa\l{}ek \at
		Jagiellonian University, Faculty of Mathematics and Computer Science, \\
        Theoretical Computer Science Department \\
		\email{piotr.kawalek@doctoral.uj.edu.pl}           
	\and	
		Jacek Krzaczkowski \at
        Maria Curie-Sklodowska University, Faculty of Mathematics, Physics and Computer Science,\\ Institute of Computer Science\\
        \email{krzacz@poczta.umcs.lublin.pl}
	\and
		Armin Weiß   \at
		Universität Stuttgart, Institut für Formale Methoden der Informatik (FMI)\\
		Universitätsstr. 38,
		70569 Stuttgart,
		Germany\\
		\email{armin.weiss@fmi.uni-stuttgart.de}
}

\date{Received: date / Accepted: date}

\maketitle

\begin{abstract}
	The study of the complexity of the equation satisfiability problem in finite groups had been initiated by Goldmann and Russell in \cite{GoldmannR02}
	where they showed that this problem is in \ptime
	for nilpotent groups  while it is \npc for non-solvable groups.
	Since then, several results have appeared showing that the problem can be solved in polynomial time in certain solvable groups $\m G$ having a nilpotent normal subgroup $H$ with nilpotent factor $\m G/H$.
	This paper shows that such normal subgroup must exist in each finite group with equation satisfiability solvable in polynomial time, unless the Exponential Time Hypothesis fails.

	\keywords{equations in groups \and solvable groups \and exponential time hypothesis \and Fitting length}
\end{abstract}


\section{Introduction}

The study of equations over algebraic structures has a long history in mathematics. Some of the first explicit decidability results in group theory are due to Makanin \cite{mak77}, who showed that equations over free groups are decidable. Subsequently several other decidability and undecidability results as well as complexity results on equations over infinite groups emerged (see \cite{DiekertE17icalpshort,GarretaMO20,LohSen06,Romankov79} for a random selection).
Also the famous 10th Hilbert problem on Diophantine equations,
that asks whether an equation of two polynomials over the ring of integers has a solution,
was shown to be undecidable \cite{Matijasevic}.

One can treat polynomials over a ring $\m R$ to be terms over $\m R$ with some variables already evaluated by elements of $\m R$.
The same can be done with groups to define polynomials over a group $\m G$.
Now the problem $\polsat{G}$ takes as input an equation of the form
$\po t(x_1,\ldots,x_n) = \po s(x_1,\ldots,x_n)$
(or equivalently $\po t(x_1,\ldots,x_n) = 1$,
by replacing $\po t = \po s$ by $\po t \po s^{-1}$),
where $\po s(\o x)$ and $\po t(\o x)$ are polynomials over $\m G$,
and asks whether this equation has a solution in $\m G$.
Obviously working with terms $\po t, \po s$ rather than polynomials this problem trivializes by setting all the $x_i$'s to $1$.
Likewise $\poleqv{G}$ is the problem of deciding whether two polynomials
$\po t(\o x), \po s(\o x)$ are equal for all evaluations of the variables $\o x$ in $\m G$.

While for infinite groups $\m G$ the problems $\polsat{G}$ and $\poleqv{G}$ may be undecidable, they are solvable in exponential time in finite realms.
In fact, $\polsat{G}$ is in \np, whereas $\poleqv{G}$ is in \conp.
Actually the hardest possible groups that lead to \npc $\polsat{}$ and \conpc $\poleqv{}$
are all groups that are not solvable \cite{GoldmannR02,HorvathM07}.
On the other hand it is easy to see that both these problems can be solved in a linear time for all finite abelian groups.

Also in nilpotent groups both $\polsat{}$ and $\poleqv{}$ can be solved in polynomial time.
While the running time of the first such algorithm for $\polsat{}$, due to Goldmann and Russell \cite{GoldmannR02}, is bounded by a polynomial of very high degree (as this bound was obtained by a Ramsey-type argument), the first algorithm for $\poleqv{}$ (due to \cite{BurrisL04}) is much faster.
For polynomials of length $\el$ the running time for $\poleqv{G}$ is bounded by
$\Oh\left(\el^{k+1}\right)$, where $k \leq \log \card{G}$ is the nilpotency class of the group $\m G$.
Very recently two much faster algorithms for $\polsat{G}$ have been described.
One by \cite{Foldvari18} runs in $\Oh\left( \el^{\frac{1}{2} \card{G}^2 \log\card{G}}  \right)$ steps.
The other one, provided in \cite{KawalekK}, runs even faster for all but finitely many nilpotent groups,
i.e. in $\Oh\left( \el^{\card{G}^2+1}\right)$ steps.
The very same paper \cite{KawalekK} concludes this race by providing randomized algorithms for $\polsat{}$ and $\poleqv{}$ working in linear time for all nilpotent groups.

\medskip

However, the situation for solvable but non-nilpotent groups has been almost completely open.
Due to \cite{HorvathS06} we know that $\polsat{}$ and $\poleqv{}$ for the symmetric group $\m S_3$ (and some others) can be done in polynomial time.
More examples of such solvable but non-nilpotent groups are provided in \cite{Horvath15,FoldvariH19}.
Actually already in 2004 Burris and Lawrence \cite{BurrisL04} conjectured that $\poleqv{}$ for all solvable groups is in \ptime.
In 2011 Horv\'ath renewed this conjecture and extended it to $\polsat{}$ \cite{Horvath11}.
Actually these conjectures have been strongly supported also by recent results in \cite{FoldvariH19}, where many other examples of solvable non-nilpotent groups are shown to be tractable.

Up to recently, the smallest solvable non-nilpotent group with unknown complexity was the symmetric group $\m S_4$. One reason that prevented existing techniques for polynomial time algorithms to work for $\m S_4$ is that $\m S_4$ does not have a nilpotent normal subgroup with a nilpotent quotient.
Somewhat surprisingly, in \cite{IdziakKK20} the first three authors succeeded to show that neither $\polsat{S_4}$ nor $\poleqv{S_4}$ is in \ptime as long as the Existential Time Hypothesis holds.
Simultaneously, in \cite{Weiss20} the fourth author proved super-polynomial lower bounds on $\polsat{}$ and $\poleqv{}$ for a broad class of finite solvable groups~-- again unless ETH fails. Both the lower bounds in \cite{IdziakKK20} and \cite{Weiss20} depended on the so-called Fitting length, which is defined as the length $d$ of the shortest chain
\[
1 = G_0 \leq G_1 \leq \ldots \leq G_d = \m G
\]
of normal subgroups $G_i$ of $\m G$ with all the quotients $G_{i+1}/G_i$ being nilpotent.
Indeed, the lower bounds in \cite{Weiss20} apply to all finite solvable groups of Fitting length at least four and to certain groups of Fitting length three. However, this class of groups does not include $\m S_4$~-- although its Fitting length is three.

The present paper extends these results by showing super-polynomial lower bounds for the complexity of $\polsat{G}$ and $\poleqv{G}$~-- again depending on the Fitting length. It strongly indicates that the mentioned conjectures by Burris and Lawrence and by Horv\'ath fail by showing the following result.

\begin{theorem}
	\label{thm:mainIntro}
	If $\m G$ is a finite solvable group of Fitting length $d\geq 3$,
	then both $\polsat{G}$ and $\poleqv{G}$ require at least
	$2^{\Omega(\log^{d-1} \el)}$ steps unless ETH fails.
\end{theorem}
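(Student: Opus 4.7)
The plan is to prove the theorem by a reduction from $\SAT$ (for $\polsat{G}$) and from $\UnSAT$ (for $\poleqv{G}$), constructing from an instance on $n$ Boolean variables an equation of length $\ell = 2^{O(n^{1/(d-1)})}$. Under ETH, which forbids $2^{o(n)}$-time algorithms for $\SAT$, any algorithm for $\polsat{G}$ running in $2^{o(\log^{d-1} \ell)}$ time would then solve $\SAT$ too fast, giving the claimed lower bound. Note that this balance is the only choice consistent with the statement: with $d{-}1$ levels of recursion yielding doubly-exponential savings, a SAT instance of size $n$ inflates to length $\ell$ with $\log^{d-1}\ell \asymp n$.

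First, fix a shortest Fitting series $1 = G_0 \Nle G_1 \Nle \cdots \Nle G_d = \m G$ with each factor $G_i/G_{i-1}$ nilpotent. The core algebraic ingredient is a family of \emph{level gadgets}: polynomial expressions $C_i(\bar x)$ over $\m G$ built from commutators between elements of $G_i$ and elements whose image in $\m G/G_{i-1}$ is nontrivial. These gadgets act as ``Boolean AND'' when their inputs are read modulo appropriate terms of the series, exploiting the non-nilpotency of $\m G$ at each level.

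Second, the compression. Nesting a level-$(i{+}1)$ gadget inside a level-$i$ gadget composes their input lengths \emph{multiplicatively} rather than additively. Iterating across $d{-}1$ levels and balancing the branching factors lets one encode a conjunction of polynomially many simple constraints~-- sufficient for a $\SAT$ instance~-- in an equation of length $2^{O(n^{1/(d-1)})}$, in the spirit of Barrington-style encodings that use non-commutativity to compress circuits. Evaluating the resulting equation at an assignment of variables to carefully chosen group elements yields a nontrivial value if and only if the $\SAT$ instance is satisfied. The $\poleqv{G}$ reduction is dual: the same construction, but interpreting nonsatisfiability as the corresponding polynomial identity.

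The main obstacle is constructing the level gadgets uniformly for \emph{every} finite solvable group of Fitting length exactly three. For $d \geq 4$, and for $d = 3$ when $\Fit(\m G)$ is sufficiently nonabelian, the constructions of \cite{Weiss20} already suffice. The hard case is Fitting length three with abelian Fitting subgroup~-- prototypically $\m S_4$ and its relatives~-- where the iterated-commutator trick collapses. Here one must generalize the ad hoc wreath-product construction of \cite{IdziakKK20} to a uniform argument covering all such groups. The plan is to analyse the $\m G/\Fit(\m G)$-module structure of $\Fit(\m G)$ via its irreducible constituents, and in each case exhibit a polynomial identity playing the role of the missing iterated commutator. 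It is at this step that the genuinely new ideas of the paper must enter; the preceding recursive compression is then a (technically involved but conceptually standard) scaling-up of that base gadget through the Fitting series.
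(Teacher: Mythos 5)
Your overall architecture matches the paper's: reduce from \SAT (and, as the paper also does, from graph coloring), build conjunction-like iterated-commutator gadgets that transport the Boolean distinction through the Fitting series, and nest them across $d-1$ levels to pack an arity-$m$ conjunction into a polynomial of length $2^{\Oh(m^{1/(d-1)})}$, which under ETH rules out $2^{o(\log^{d-1}\ell)}$-time algorithms. Your ``level gadgets'' are precisely the paper's polynomials $\po{q}^{(k)}(z,x_1,\dots,x_k,w)$ built from $\omega$-fold iterated commutators, and \autoref{lem:lowerPolys} establishes your claim uniformly for all levels $\alpha\leq d-1$~-- including when $\Fit(G)$ is abelian. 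In particular the paper does \emph{not} case-split between $d\geq 4$, $d=3$ with nonabelian $\Fit(G)$, and $d=3$ with abelian $\Fit(G)$, nor does it delegate anything to \cite{Weiss20}: the descent along $\gamma_j(\cU_\alpha)$ inside \autoref{lem:lowerPolys} goes through unchanged for $\m S_4$ and its relatives, so the iterated-commutator trick does not ``collapse'' there.

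The genuine gap is the top-level gadget, and your sketch aims at the wrong structure. You propose analyzing the $G/\Fit(G)$-module structure of $\Fit(G)$~-- a \emph{bottom} chief factor~-- but the paper instead analyzes a chief factor near the \emph{top} of the series. Concretely, it first replaces $G$ by the inducible subgroup $G_0=\gamma_m(G)$ (with $m$ maximal so that $\gamma_m(G)\not\leq\cU_{d-1}$), preserving the Fitting length and making $G_0/\cU_{d-1}(G_0)$ abelian (\autoref{lem:topQuotientabelian}). It then picks $K\Nleq G_0$ minimal with $[K,G_0]=K$ and $\FitL(K)=d-1$ (so $K\not\leq\Fit(G_0)$ for $d\geq 3$, as $K$ is non-nilpotent), shows $K$ is indecomposable~-- hence has a unique maximal proper $G_0$-normal subgroup $K_0$~-- and sets $H=\set{g\in G_0}{[K,g]\leq K_0}$, a normal subgroup with $\cU_{d-1}\leq H<G_0$. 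The decisive lemma \sref{lem:phiGiso} proves that $x\mapsto[x,g]$ induces an \emph{automorphism} of the chief factor $K/K_0$ whenever $g\notin H$, using that $K/K_0$ is generated as a $G_0$-module by any nonidentity element and that $G_0/H$ is abelian; this gives the top gadget $\mcomm{x}{\omega}{y}\in xK_0$ for $y\notin H$ versus $\in K_0$ for $y\in H$ \sref{lem:phiomega}. Without this chain $K_0<K\leq\cU_{d-1}\leq H<G_0$ and the automorphism property your plan has no controlled top-level gadget. Finally, note that the reduction actually goes from $\abs{G/H}$-coloring whenever $\abs{G/H}\geq 3$ and falls back to \SAT only when $\abs{G/H}=2$; your proposal mentions only \SAT, and the subgroup $H$ was constructed precisely because a single Boolean interpretation of the top quotient is unavailable in general.
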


The paper \cite{BarringtonMMTT00} contains all necessary pieces to provide for $\polsat{G}$
an upper bound of the form $2^{\Oh(\log^{r} \el)}$ with $r\geq 1$ depending on $\m G$ whenever $\m G$ is a finite solvable group.
This upper bound relies on the \textit{AND}-weakness conjecture saying that each $\CC$ circuit for the $n$-input \textit{AND} function has at least $2^{n^\delta}$ gates. Thus, the \textit{AND}-weakness conjecture implies that the lower bounds in \autoref{thm:mainIntro} cannot be improved in an essential way.

\medskip

Finally, we note that allowing to use definable polynomials as additional basic operations to built the input terms $\po t, \po s$ we may substantially shorten the size of the input.
For example with the commutator $[x,y]=x^{-1}y^{-1}xy$ the expression
$[\dots[[x,y_1], y_2],\ldots,y_n]$ has linear size, while when presented in the pure group language it has exponential size.
In this new setting \polsat{} (and \poleqv{}) have been shown \cite{HorvathS11,Kompatscher19}
to be \npc (or \conpc, respectively) for all non-nilpotent groups.
Actually our proof of Theorem \ref{thm:mainIntro} shows this as well.

Moreover, the paper \cite{IdziakK18} shows (in a very broad context of an arbitrary algebra) that allowing such definable polynomials can be simulated by circuits over this algebra.

\section{Preliminaries}\label{sec:prelims}

\paragraph{Complexity and the Exponential Time Hypothesis.}
We use standard notation from complexity theory as can be found in any textbook on complexity,
\eg \cite{pap94}.

The Exponential Time Hypothesis (ETH) is the conjecture that there is some $\delta > 0$ such that every algorithm for $\SAT$ needs time $\Omega(2^{\delta n})$ in the worst case where $n$ is the number of variables of the given $\SAT$ instance.
By the Sparsification Lemma \cite[Thm.~1]{ImpagliazzoPZ01} this is equivalent to the existence of some $\epsilon > 0$ such that every algorithm for $\SAT$ needs time $\Omega(2^{\epsilon (m+n)})$ in the worst case where $m$ is the number of clauses of the given $\SAT$ instance (see also \cite[Thm.~14.4]{CyganFKLMPPS15}). In particular, under ETH there is no algorithm for \SAT running in time $2^{o(n+m)}$.

Another classical \npc problem is \KColoring{C} for $C\geq 3$.
Given an undirected graph $\Gamma = (V,E)$ the question is whether there is a valid $C$-coloring of $\Gamma$, i.e. a map $\chi:V \to \interval{1}{C}$ satisfying
$\chi(u) \neq \chi(v)$ whenever $\smallset{u,v}\in E$.
Moreover, by \cite[Thm.~14.6]{CyganFKLMPPS15}, \KColoring{3} cannot be solved in time $2^{o(\abs{V} + \abs{E})}$ unless ETH fails. Since  \KColoring{3} can be reduced to  \KColoring{C} for fixed $C \geq 3$ by introducing only a linear number of additional edges and a constant number of vertices, it follows for every $C\geq 3$ that also \KColoring{C} cannot be solved in time $2^{o(\abs{V} + \abs{E})}$ unless ETH fails.

\paragraph{Groups and Commutators}
Throughout, we only consider finite groups $\m G$.
We follow the notation of \cite{Robinson96book}.
For a groups $\m G$ and $\m H$ we write $H\leq G$ if $H$ is a subgroup of $\m G$,
or  $H<G$ if $H$ is a proper subgroup of $G$.
Similarly we write $H\Nleq G$ (or $H\Nle G$) if $H$ is a normal subgroup of $\m G$
(or a proper normal subgroup).
For a subset $X \sse G$ we write $\gen{X}$ for the subgroup generated by $X$,
and $\Ncl{X} = \genr{x^g}{x\in X, g\in G}$ for the normal subgroup generated by $X$.

We write $[x,y] = x^{-1}y^{-1}xy$ for the commutator and $x^y = y^{-1}xy$ for the conjugation. Moreover, we write $[x_1, \dots, x_n] = [[x_1, \dots, x_{n-1}],x_n]$ for $n\geq 3$.

We will be also using commutator of (normal) subgroups (or even subsets) $X,Y,X_1,\dots, X_k \sse G$
defined by $[X,Y] = \genr{[x,y]}{x \in X, y \in Y}$
and $[X_1,\dots, X_k] = [[X_1,\dots,X_{k-1}], X_k]$.
Note here that the commutator $[H,K]$ is a normal subgroup of $\m G$ whenever $H$ and $K$ are.
Finally, we put $\mcomm{x}{k}{y} = [x,\underbrace{y,\dots,y}_{k\text{ times}}]$
and $\mcomm{X}{k}{Y} = [X,\underbrace{Y,\dots,Y}_{k\text{ times}}]$.

We will also need the concept of a centralizer of a subset $X$ in $G$, which is defined
as $C_G(X) = \set{g \in G}{[g,h]=1 \text{ for all } h \in X}$.
If $N$ is a normal subgroup, then $C_G(N)$ is a normal subgroup as well.

Below we collect some basic facts about commutators of elements and subgroups.

\begin{senumerate}
\item
\label{lem:commutatorBasics}
For $g,x,y,z,x_1, \dots x_n,y_1, \dots y_n  \in G$
 and normal subgroups $K_1, K_2,M,N$ of a group $G$ and
we have
\begin{enumerate}[(i)]
\item
\label{basicCommId}
$[xy,z] = [x,z]^y[y,z]$ \text{ and }  $[x,yz] = [x,z][x,y]^z$.
\item\label{joindistributive}
$[K_1,K_2] = [K_2,K_1] \leq K_1 \cap K_2$  \text{ and }  $[K_1K_2,N] = [K_1,N][K_2,N]$.
\item
\label{iteratedCommId}
If	$x \equiv y \mod N$ and $g \in M$, then for all $k\in \N$ we have
\[\mcomm{x}{k}{g}  \equiv \mcomm{y}{k}{g} \mod \mcomm{N}{k}{M}.\]
	\item\label{iteratedCommIdright} If $g \in M$ and $x_i \equiv y_i \mod N$, then \[[g,x_1, \dots, x_n] \equiv [g,y_1, \dots, y_n]\mod [M,N].\]
	\item\label{centralComm} For all $f \in C_G(N)$,  $g \in G$, $h \in N$ and $k \in \N$ we have \[\mcomm{hf}{k}{g} = \mcomm{h}{k}{g}\mcomm{f}{k}{g}.\]
\end{enumerate}
\end{senumerate}

\begin{proof}
For (\ref{basicCommId}), see \cite[5.1.5]{Robinson96book}.
The first part of (\ref{joindistributive}) is clear from the definition,
while the second one follows immediately from (\ref{basicCommId}).
To see (\ref{iteratedCommId}) and (\ref{iteratedCommIdright}),
let $g \in M$, $x,y \in G$ and $h \in N$ with $hx=y$ to see that
	\begin{align*}
		[hx,g] &= [h,g]^x[x,g] \in [N,M] [x,g]\qquad \text{ and }\\
		[g,hx] &= [g,x][g,h]^x \in [g,x] [M,N].
	\end{align*}
Then our statements follow by induction.	
	
	Finally, for (\ref{centralComm}), let $f \in C_G(N) = \set{g \in G}{[f,h]=1 \text{ for all } h \in N}$ and $g \in G$, $h \in N$. Then we have
	\begin{align*}
		[hf,g] &= [h,g]^f[f,g] = [h,g] [f,g].
	\end{align*}
Since $ C_G(N)$ is a normal subgroup, also $  [f,g] \in C_G(N) $ so that we can then induct on $k$.
\myqed
\end{proof}

\newcommand{\omegaG}{\omega}

Since $G$ is finite, for all $x,y \in G$, there are $i<j$ such that $\mcomm{x}{i}{y}=\mcomm{x}{j}{y}$. Writing $k = j-i$, we get $\mcomm{x}{i}{y}=\mcomm{x}{i+k}{y}$ for all sufficiently large $i$'s. For each choice of $x$ and $y$ we might get a different value for $k$; yet, by taking a common multiple of all the $k$'s, we obtain some $\omegaG \in \N$ such that for all $x,y \in G$ and all $i\geq \omegaG$ we have $\mcomm{x}{i}{y}=\mcomm{x}{i+\omegaG}{y}$.

Since for normal subgroups $M,N$ of $\m G$ we have
\[
M \geq \mcomm{M}{1}{N} \geq \mcomm{M}{2}{N} \geq \ldots \geq \mcomm{M}{i}{N} \geq \mcomm{M}{i+1}{N}\geq \ldots,
\]
the finiteness of $\m G$ ensures us that there is some $k_0 \in \N$ such that $\mcomm{M}{k_0}{N} = \mcomm{M}{k}{N}$ for all $k \geq k_0$ and all normal subgroups $M,N$ of $\m G$.
We can assume that $\omegaG \geq k_0$.
It is clear that $\omegaG =\abs{G}!$ is large enough, but typically much smaller values suffice.
Thus, we have:
\begin{senumerate}
\item For $x,y \in G$, $M,N\Nleq G$ and $i,j\geq \omegaG$ we have
\begin{itemize}
	\item $\mcomm{x}{i}{y}=\mcomm{x}{i+\omegaG}{y}$,
	\item $\mcomm{M}{i}{N}=\mcomm{M}{j}{N}$.
\end{itemize}
\end{senumerate}
We fix $\omegaG = \omegaG(G)$ throughout. Be aware that it depends on the specific group $G$.

\paragraph{Nilpotency and Fitting series.}
The $k$-th term of the lower central series is $\gamma_k(G) = \mcomm{G}{k}{G}$. The \emph{nilpotent residual} of $G$ is defined as $\bigcap_{k\geq 0 }\gamma_k(G) = \gamma_\omegaG (G)$ where $\omegaG$ is as above (\ie $\nilRes{G} = \gamma_i(G)$ for every $i \geq \omegaG$). Recall that a finite group $G$ is nilpotent if and only if $\nilRes{G} = 1$.

The \emph{Fitting} subgroup $\Fit(G)$ is the union of all nilpotent normal subgroups. Let $G$ be a finite solvable group. It is well-known that $\Fit(G)$ itself is a nilpotent normal subgroup (see \eg \cite[Satz 4.2]{Huppert67}).
We will need the following characterization of the Fitting subgroup due to Baer
(see \cite[Satz L']{Baer57} for the proof).
\begin{senumerate}
\item
\label{thm:baer}
$\Fit(G) = \set{g \in G}{\mcomm{h}{\omegaG}{g} =1\text{ for all } h\in G}.$
\end{senumerate}

Now we define the \emph{upper Fitting series}
\[1 = \cU_0(G) \Nle \cU_1(G) \Nle \cdots \Nle \cU_k (G) = G \]
by $\cU_{i+1}(G)/\cU_i(G) = \Fit(G/\cU_i(G))$.
If the group is clear, we simply write $\cU_i $ for $\cU_i(G)$.
The number of factors $k$ is called the \emph{Fitting length} of $G$ (denoted by $\FitL(G)$). 
The following fact can be derived by a straightforward induction from the characterization of $\Fit(G)$ as largest nilpotent normal subgroup.
\begin{senumerate}
\item
\label{lem:Fitting}
For $H \Nleq G$ and $g \in G$ we have
\begin{itemize}
		\item $\cU_i (H) = \cU_i \cap H$, for all $i$,
        \item $\FitL(H) \leq i$ if and only if $H \leq \cU_i$,
		\item $\FitL\Ncl{g} = i$ if and only if $g \in \cU_{i} \setminus \cU_{i-1}$.
\end{itemize}
\end{senumerate}

\paragraph{Equations in Groups.}
A \emph{term} (in the language of groups) is a word over an alphabet $\cX \cup \cX^{-1}$ where $\cX$ is a set of variables. A \emph{polynomial} over a group $\m G$ is a term where some of the variables are replaced by constants~-- \ie a word over the alphabet $G \cup \cX \cup \cX^{-1}$.
Since we are dealing with finite groups only, a symbol $X^{-1}\in \cX^{-1}$ for $X\in \cX$ can be considered as an abbreviation for $X^{\abs{G}-1}$.
We write $\polyA(x_1, \dots, x_n)$ or short $\polyA(\o x)$ for a polynomial (resp.\ term) $\polyA$ with variables from $\oneset{x_1, \dots, x_n}$. There is a natural composition of terms and polynomials: if $\polyB(x_1, \dots, x_n),\polyA_1, \dots, \polyA_n$ are polynomials (resp.\ terms), we write $\polyB(\polyA_1, \dots, \polyA_n)$ for the polynomials (resp.\ terms) obtained by substituting each occurrence of a variable $x_i$ by the polynomial (resp.\ term) $\polyA_i$.

A tuple $(g_1, \dots, g_n)\in G^n$ is a \emph{satisfying} assignment for $\polyA$
if $\polyA(g_1, \dots, g_n)=1$ in $G$.
The problems $\polsat{G}$ and $\poleqv{G}$ are as follows: for both of them the input is a polynomial $\polyA(x_1, \dots, x_n)$.
For $\polsat{G}$ the question is whether there \emph{exists} a satisfying assignment,
for $\poleqv{G}$ the question is whether \emph{all} assignments are satisfying.
Note here that these problems have many other names. For example in in \cite{Weiss20,GoldmannR02}, \polsat{} is denoted by  EQN-SAT and \poleqv{}  by EQN-ID.

\paragraph{Inducible subgroups.}
According to \cite{GoldmannR02}, we call a subset $S \sse G$ \emph{inducible} if
$S = \set{\polyA(g_1, \dots, g_n)}{g_1, \dots, g_n \in G}$
for some polynomial $\polyA(x_1, \dots, x_n)$ of $\m G$.
The importance of inducible subgroups lies in the observation that one can restrict variables in equations to inducible subgroups (simply by replacing each variable by the polynomial defining the inducible subgroup). This immediately gives the following lemma.

\begin{lemma}[{\,\!\cite[Lemma 8]{GoldmannR02}, \cite[Lemma 9,  10]{HorvathS11}}]\label{lem:inducibleEQN}
If $\m H$ is an inducible subgroup of $\m G$, then
\begin{itemize}
	\item $\polsat{H}$ is polynomial time many-one reducible to $\polsat{G}$,
	\item $\poleqv{H}$ is polynomial time many-one reducible to $\poleqv{G}$.
\end{itemize}	
\end{lemma}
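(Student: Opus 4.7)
The plan is to use the defining polynomial of $H$ as a uniform ``change of variable'' gadget. By inducibility, fix a polynomial $\polyA(x_1,\dots,x_n)$ over $\m G$ whose image over $G^n$ is exactly $H$. Given an input $\polyB(z_1,\dots,z_m)$, which is a polynomial over $\m H$ and, since $H\sse G$, equally well a polynomial over $\m G$, I would build the reduction
\[
\polyB'(\o y_1,\dots,\o y_m) \;:=\; \polyB\bigl(\polyA(\o y_1),\dots,\polyA(\o y_m)\bigr),
\]
where $\o y_1,\dots,\o y_m$ are pairwise disjoint fresh $n$-tuples of variables. This is clearly computable in time $\Oh(m\cdot|\polyA|+|\polyB|)$, hence in polynomial time (note that $\polyA$ is fixed once and for all with $\m G$, so its size is constant in the input).

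For the $\polsat{}$ reduction, I would argue equivalence of satisfiability: if $(h_1,\dots,h_m)\in H^m$ satisfies $\polyB$ over $\m H$, then by inducibility there exist preimages $\o y_i\in G^n$ with $\polyA(\o y_i)=h_i$, and these jointly satisfy $\polyB'$ over $\m G$. Conversely, every $G$-assignment to $\polyB'$ produces a tuple $(\polyA(\o y_1),\dots,\polyA(\o y_m))\in H^m$ satisfying $\polyB$. For $\poleqv{}$ the same substitution works: if $\polyB\equiv 1$ over $\m H$, then any $G$-assignment to $\polyB'$ plugs elements of $H$ into $\polyB$, giving $1$; conversely, if $\polyB'\equiv 1$ over $\m G$, then every $H$-assignment $(h_1,\dots,h_m)$ of $\polyB$ can be realized as $h_i=\polyA(\o y_i)$ by inducibility, so $\polyB(\o h)=\polyB'(\o y)=1$.

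There is no real obstacle here~-- the lemma is essentially a formal bookkeeping statement. The only points that deserve a careful sentence are that constants of $\m H$ appearing in $\polyB$ are simply reinterpreted as elements of $G$ (legitimate because $H\sse G$), and that the reduction is many-one rather than Turing: the produced instance $\polyB'$ is a single polynomial whose yes/no answer over $\m G$ coincides, for both $\polsat{}$ and $\poleqv{}$, with that of $\polyB$ over $\m H$.
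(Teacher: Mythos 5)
Your proposal is correct and uses exactly the mechanism the paper intends (the paper itself only cites \cite{GoldmannR02,HorvathS11} after observing that one ``simply replaces each variable by the polynomial defining the inducible subgroup''): substitute each variable of $\polyB$ by a fresh copy of the inducing polynomial $\polyA$, then use that the image of $\polyA$ over $G^n$ is exactly $H$ to transfer assignments in both directions, for both \polsat{} and \poleqv{}. One minor imprecision: the length of $\polyB'$ is $\Oh(|\polyB|\cdot|\polyA|)$ rather than $\Oh(m\cdot|\polyA|+|\polyB|)$, because every \emph{occurrence} of each variable (not merely each distinct variable) is expanded into a copy of $\polyA$~-- this is still polynomial in $|\polyB|$ since $|\polyA|$ is a constant depending only on $\m G$ and $\m H$, so the reduction is unaffected.
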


We will use this lemma to restrict our consideration for an appropriate subgroup of the form
$\gamma_k(G)$. We will see that such subgroups are inducible.



\section{Proof of Theorem \ref{thm:mainIntro}}

The proof of the theorem is based on coding (by group polynomials) functions that imitate the behaviour of conjunctions.
Unfortunately, the lengths of such $n$-ary conjunction-like group polynomials are not bounded by any polynomial in $n$ and, therefore, they cannot be used to show \np-completeness of \polsat{}.\footnote{In fact, the mentioned \textit{AND}-weakness conjecture prevents the existence of such short -- polynomial size -- ``conjunction-like'' expressions. On the other hand, our construction also shows that the strongest version of the \textit{AND}-weakness conjecture~-- a $2^{\Omega(n)}$ lower bound~-- does not hold.}
However, the group polynomials we are going to produce have length bounded by
$2^{\Oh(n^{\frac{1}{d-1}})}$ where $d = \FitL(G)$.
Given such relatively short conjunction-like group polynomials we reduce graph coloring or \SAT,
depending on whether $\abs{G/H} \geq 3$ for a carefully chosen large subgroup $H$ of $\m G$.
In any case such reduction, together with the ETH, would give the lower bound
$2^{\Omega(\log^{d-1} \el)}$ for $\polsat{G}$.

\newcommand{\polyQ}[1]{\po{q}^{(#1)}}
\newcommand{\polyQstar}[1]{\po{\tilde q}^{(#1)}}

\newcommand{\FittingIndex}{\alpha}

To see how to produce such relatively short conjunction-like polynomials, we start
with the upper Fitting series of $\m G$
\[
1 = \cU_0 \Nle \cU_1 \Nle \cdots \Nle \cU_d  = G
\]
to go downwards along this series and consecutively carefully choose
$h_{\FittingIndex} \in \cU_{\FittingIndex} \setminus \cU_{\FittingIndex-1}$
on each level $\FittingIndex = d,d-1,\ldots,1$ of this sequence.
Then we get two different cosets
$\cU_{\FittingIndex-1}$ and $h_{\FittingIndex}\cdot{}\cU_{\FittingIndex-1}$
which are supposed to simulate false and true values, respectively.

The conjunction-like polynomials are based on the terms
$\polyQstar{k}(z,x_1, \dots, x_{k})$ and $\polyQ{k}(z,x_1, \dots, x_{k},w)$ for $k\geq 0$
defined by
\begin{align*}
\polyQstar0(z)
    &= z, \\
\polyQstar{k}(z, x_1, \dots, x_{k})
    &= \mcomm{\polyQstar{k-1}(z, x_1, \dots, x_{k-1})}{\omegaG}{x_{k}},
    && \text{for } k \geq 1, \text{ and }\\
\polyQ{k}(z, x_1, \dots, x_{k},w)
    &= \polyQstar{k+1}(z, x_1, \dots, x_{k},w),
    && \text{for } k \geq 0.
\end{align*}
Note that our definition of the $\polyQ{k}$'s immediately yields
\begin{align}
\polyQ{k+1}(z,x_1,\dots, x_{k},w,w)=\polyQ{k}(z,x_1,\dots, x_{k},w)
\label{qkww}
\end{align}

\noindent
The conjunction-like behaviour of the $\polyQ{k}$'s on the $\cU_{\FittingIndex}$-cosets
is precisely described in the following lemma.

\begin{lemma}
\label{lem:lowerPolys}
For any level $1 \leq \FittingIndex \leq d-1$
and $h_{\FittingIndex+1} \in \cU_{\FittingIndex+1} \setminus \cU_{\FittingIndex}$
there is some $h_{\FittingIndex} \in \cU_{\FittingIndex} \setminus \cU_{\FittingIndex-1}$
such that for each $k\in \N$ we have
\begin{align*}
\polyQ{k}(h_{\FittingIndex},x_1, \dots, x_k, h_{\FittingIndex+1}) \in
\begin{cases}
    h_{\FittingIndex}\cdot \cU_{\FittingIndex-1},
    &\text{if } x_i \in h_{\FittingIndex+1}\cdot \cU_{\FittingIndex} \text{ for all $i$,}
    \\
	\hphantom{h_{\FittingIndex}\cdot{}}\cU_{\FittingIndex-1},
    &\text{if } x_i \in \cU_{\FittingIndex} \text{ for some $i$}.
\end{cases}
\end{align*}
\end{lemma}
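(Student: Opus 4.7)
The plan is to work throughout in the quotient $\bar G = G/\cU_{\FittingIndex-1}$, where by the very definition of the upper Fitting series one has $\Fit(\bar G)=\bar \cU_{\FittingIndex}$, and the required membership ``$\alpha\in h\cU_{\FittingIndex-1}$'' becomes an honest equality $\bar\alpha=\bar h$ in $\bar G$. The stability of $\omegaG$-fold commutators makes the map $\bar f(x):=\mcomm{x}{\omegaG}{\bar h_{\FittingIndex+1}}$ idempotent on $\bar G$, and this idempotence is the workhorse. The ``false'' case is then immediate: if some $\bar x_i\in\bar \cU_{\FittingIndex}=\Fit(\bar G)$ then Baer's characterization applied in $\bar G$ gives $\mcomm{\bar a}{\omegaG}{\bar x_i}=1$ for every $\bar a$, so just after the $i$th outer $\omegaG$-block the running value sits in $\cU_{\FittingIndex-1}$, and normality of $\cU_{\FittingIndex-1}$ keeps it there through the remaining brackets---including the final one with $\bar h_{\FittingIndex+1}$.

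To construct $h_{\FittingIndex}$ I exploit that the subgroup $\bar \cU_{\FittingIndex+1}$ of $\bar G$ has Fitting length at most two: by the characterization $\Fit(H)=H\cap \Fit(G)$ for normal $H$ we have $\Fit(\bar \cU_{\FittingIndex+1})=\bar \cU_{\FittingIndex}$, and $\bar \cU_{\FittingIndex+1}/\bar \cU_{\FittingIndex}$ is nilpotent. Since $\bar h_{\FittingIndex+1}\in \bar \cU_{\FittingIndex+1}\setminus\Fit(\bar \cU_{\FittingIndex+1})$, Baer applied inside $\bar \cU_{\FittingIndex+1}$ produces $\bar g\in\bar \cU_{\FittingIndex+1}$ with $\mcomm{\bar g}{\omegaG}{\bar h_{\FittingIndex+1}}\neq 1$; this iterated commutator already lies in the nilpotent residual $\nilRes{\bar \cU_{\FittingIndex+1}}\leq \bar \cU_{\FittingIndex}$. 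I set $\bar h_{\FittingIndex}:=\mcomm{\bar g}{\omegaG}{\bar h_{\FittingIndex+1}}\in \bar \cU_{\FittingIndex}\setminus\{1\}$; the idempotence of $\bar f$ gives $\bar f(\bar h_{\FittingIndex})=\bar h_{\FittingIndex}$, and any lift $h_{\FittingIndex}$ automatically sits in $\cU_{\FittingIndex}\setminus\cU_{\FittingIndex-1}$.

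For the ``true'' case I write $\bar x_i=\bar h_{\FittingIndex+1}\bar u_i$ with $\bar u_i\in\bar \cU_{\FittingIndex}$, and aim to show that each outer $\omegaG$-block $\mcomm{\cdot}{\omegaG}{\bar x_i}$ acts on elements of the $\bar f$-image exactly as $\bar f$ itself does. Combined with $\bar f(\bar h_{\FittingIndex})=\bar h_{\FittingIndex}$, this propagates through all $k+1$ blocks and gives $\polyQ{k}(\bar h_{\FittingIndex},\bar x_1,\dots,\bar x_k,\bar h_{\FittingIndex+1})=\bar h_{\FittingIndex}$. The key algebraic tool is the expansion $[a,bc]=[a,c][a,b]^{c}$ together with the modular iterated-commutator identity, used to trade each $\bar u_i$-perturbation against an error lying in $[M,\bar \cU_{\FittingIndex}]$ for a suitable normal $M$ containing $\bar h_{\FittingIndex}$; an induction on the nilpotency class of the nilpotent group $\bar \cU_{\FittingIndex}$, with abelian base case trivial because $\bar \cU_{\FittingIndex}$ then acts on itself trivially by conjugation, drives these errors into progressively deeper terms of the lower central series of $\bar \cU_{\FittingIndex}$.

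The hardest part will be controlling exactly these error terms. A single application of the iterated-commutator identity gives equality only modulo $[\bar \cU_{\FittingIndex+1},\bar \cU_{\FittingIndex}]$, which need not be trivial. Overcoming this will likely require either (i) refining $\bar h_{\FittingIndex}$ to a \emph{common} fixed point of the whole family of idempotents $\bar f_{\bar y}(x)=\mcomm{x}{\omegaG}{\bar y}$ indexed by $\bar y\in\bar h_{\FittingIndex+1}\bar \cU_{\FittingIndex}$---the existence of such a common fixed point following from one further stability/finiteness argument applied to compositions $\bar f_{\bar y_1}\circ\cdots\circ\bar f_{\bar y_m}$---or (ii) iterating the identity through successive quotients by $\gamma_j(\bar \cU_{\FittingIndex})$, each pass absorbing the error into the next, eventually trivial, term.
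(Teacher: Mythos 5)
Your framing is right up through the ``false'' case and the idempotence observation, and your variant construction of $h_\alpha$ (applying Baer inside $\bar{\cU}_{\alpha+1}$ rather than in all of $\bar G$) is a legitimate alternative: it automatically places the candidate $\bar h_\alpha = \mcomm{\bar g}{\omegaG}{\bar h_{\alpha+1}}$ inside $\nilRes{\bar\cU_{\alpha+1}} \leq \bar\cU_\alpha$, which is a minor convenience over the paper's choice of $a\in\bar G$.

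However, your proof has a genuine gap in the ``true'' case, and you essentially acknowledge it yourself. The difficulty you describe---that the modular iterated-commutator identity produces an error lying only in some $[M,\bar\cU_\alpha]$ that need not be trivial, and that iterating through $\gamma_j(\bar\cU_\alpha)$ quotients has no visible mechanism to force the error downward---is precisely the crux, and neither of your sketched escape routes resolves it. The missing idea is that $h_\alpha$ must be chosen not as an arbitrary nontrivial value of $\mcomm{\bar g}{\omegaG}{\bar h_{\alpha+1}}$, but as one lying as deep as possible in the lower central series of $\bar\cU_\alpha$: pick $\beta$ maximal with $\mcomm{\bar a}{\omegaG}{\bar h_{\alpha+1}}\in\gamma_\beta(\bar\cU_\alpha)\setminus\{1\}$ for some $\bar a$, and set $\bar h_\alpha$ equal to such a value. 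This maximality is exactly what powers the descent: it forces $\Ncl{\bar h_\alpha}\leq\gamma_\beta(\bar\cU_\alpha)$, so the error $\bar f$ from the iterated-commutator identity already sits in $\gamma_{\beta+1}(\bar\cU_\alpha)$, and then for any $\bar f\in\gamma_{\beta+1}(\bar\cU_\alpha)$ maximality of $\beta$ forces $\mcomm{\bar f}{\omegaG}{\bar h_{\alpha+1}}=1$. Combining this with the centralizer distributivity $\mcomm{\bar h_\alpha \bar f}{\omegaG}{\bar h_{\alpha+1}}\equiv\mcomm{\bar h_\alpha}{\omegaG}{\bar h_{\alpha+1}}\mcomm{\bar f}{\omegaG}{\bar h_{\alpha+1}} \pmod{\gamma_{j+1}(\bar\cU_\alpha)}$ and the overall idempotence $\mcomm{\bar h_\alpha\bar f}{\omegaG}{\bar h_{\alpha+1}}=\bar h_\alpha\bar f$ (which follows because the whole expression equals $\polyQ{k}$ and $\polyQ{k+1}(\cdot,\dots,w,w)=\polyQ{k}(\cdot,\dots,w)$), one concludes $\bar f\in\gamma_{j+1}(\bar\cU_\alpha)$ and thus, by descent, $\bar f=1$. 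Your option (i) (a common fixed point of all $\bar f_{\bar y}$) is strictly stronger than what is needed and no stability/finiteness argument on non-commuting compositions readily produces it; your option (ii) is the right direction but is not a proof without the maximal-$\beta$ choice that makes the error terms actually drop one $\gamma$-level per step.
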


\begin{proof}	
\newcommand{\centralIndex}{\beta}
	
In this proof we may, without loss of generality,
factor out our group $\m G$ by $\cU_{\FittingIndex-1}$,
or equivalently assume that $\FittingIndex=1$.
This means that $\cU_{\FittingIndex} = \Fit(G)$ and so, by Baer's theorem (\ref{sec:prelims}.\ref{thm:baer}),
there is some $a \in G$ with $\mcomm{a}{\omegaG}{h_{\FittingIndex+1}} \neq 1$.
Let $\centralIndex \in \N$ be maximal such that
$\mcomm{a}{\omegaG}{h_{\FittingIndex+1}} \in \gamma_{\centralIndex}(\cU_{\FittingIndex})
\setminus  \{1\}$ for some $a \in G$.
Now, we simply put  $h_{\FittingIndex} = \mcomm{a}{\omegaG}{h_{\FittingIndex+1}}$,
to observe that $h_{\FittingIndex} = \mcomm{h_{\FittingIndex}}{\omegaG}{h_{\FittingIndex+1}}$
and $\Ncl{h_{\FittingIndex}}\leq \gamma_{\centralIndex}(\cU_{\FittingIndex})$.
The last inclusion gives that for all $x_1, \dots, x_{k+1}\in G$ we have
$\polyQ{k}(h_{\FittingIndex}, x_1, \dots, x_{k+1})\in \gamma_\centralIndex(\cU_{\FittingIndex})$.

Suppose now that one of the $x_i$'s is in $\cU_{\FittingIndex}$.
Then
$\polyQstar{i}(h_{\FittingIndex}, x_1, \dots, x_i)
= \mcomm{\polyQstar{i-1}(h_{\FittingIndex}, x_1, \dots, x_{i-1})}{\omegaG}{x_i}
\in \mcomm{\cU_{\FittingIndex}}{\omegaG}{\cU_{\FittingIndex}}
= \nilRes{\cU_{\FittingIndex}} = \{1\}$.
Hence, also $\polyQ{k}( h_{\FittingIndex}, x_1, \dots, x_k,h_{\FittingIndex+1})=1$.
	
On the other hand, if all the $x_i$'s are in the coset $h_{\FittingIndex+1} \cU_{\FittingIndex}$, then, by (\ref{sec:prelims}.\ref{lem:commutatorBasics}.\ref{iteratedCommIdright}), we have
$\polyQ{k}(h_{\FittingIndex},x_1, \dots, x_k, h_{\FittingIndex+1}) \equiv
\polyQ{k}(h_{\FittingIndex},h_{\FittingIndex+1}, \dots, h_{\FittingIndex+1}, h_{\FittingIndex+1})=
h_{\FittingIndex}$ modulo
$[\Ncl{h_{\FittingIndex}},\cU_{\FittingIndex}]
\leq [\gamma_{\centralIndex}(\cU_{\FittingIndex}) ,\cU_{\FittingIndex}]
\leq \gamma_{\centralIndex+1}(\cU_{\FittingIndex})$.
Hence, $\polyQ{k}(h_{\FittingIndex}, \o x,h_{\FittingIndex+1}) = h_{\FittingIndex} f$
for some $f \in \gamma_{\centralIndex+1}(\cU_{\FittingIndex})$.
Thus, all we have to show is that $f \in \cU_{\FittingIndex-1}$,
or -- in our setting -- that $f=1$.
To do this we induct on $j\geq \centralIndex+1$ to show that
$f \in \gamma_j(\cU_{\FittingIndex})$ for all $j$'s.

Starting with $f \in \gamma_j(\cU_{\FittingIndex}) \leq \gamma_{\centralIndex+1}(\cU_{\FittingIndex})$,
we also have
$\mcomm{f}{\omegaG}{h_{\FittingIndex+1}} \in \gamma_{\centralIndex+1}(\cU_{\FittingIndex})$.
But now, maximality of $\centralIndex$ ensures us that
\begin{align}
	\mcomm{f}{\omegaG}{h_{\FittingIndex+1}} =1.
\label{fdescent}
\end{align}
Obviously $[f,g] \in \gamma_{j+1}(\cU_{\FittingIndex})$ whenever
$f \in \gamma_j(\cU_{\FittingIndex})$ and $g \in \cU_{\FittingIndex}$.
This simply means that
$f\in C_{G/\gamma_{j+1}(\cU_{\FittingIndex})}
(\cU_{\FittingIndex}/\gamma_{j+1} (\cU_{\FittingIndex}))$,
and by (\ref{sec:prelims}.\ref{lem:commutatorBasics}.\ref{centralComm}) we obtain
\begin{align}
	\mcomm{h_{\FittingIndex}f}{\omegaG}{h_{\FittingIndex+1}}
	&\equiv \mcomm{h_{\FittingIndex}}{\omegaG}{h_{\FittingIndex+1}}\mcomm{f}{\omegaG}{h_{\FittingIndex+1}} \mod \gamma_{j+1}(\cU_{\FittingIndex}).\label{mcommdistributive}
\end{align}
Summing up we get
\begin{align*}
h_{\FittingIndex}f
&= \mcomm{h_{\FittingIndex}f}{\omegaG}{h_{\FittingIndex+1}}
\tag{by (\ref{qkww})}
\\[.2em]
&\equiv \mcomm{h_{\FittingIndex}}{\omegaG}{h_{\FittingIndex+1}}\mcomm{f}{\omegaG}{h_{\FittingIndex+1}} &&\hspace{-20mm}\mod \gamma_{j+1}(\cU_{\FittingIndex})
\tag{by (\ref{mcommdistributive})}
\\[.2em]
&= h_{\FittingIndex} \cdot 1,
\tag{by (\ref{fdescent})}
\end{align*}
so that $f\in \gamma_{j+1}(\cU_{\FittingIndex})$.

Going along the $j$'s we arrive to a conclusion that $f \in  \nilRes{\cU_{\FittingIndex}} = \{1\}$, as promised.
\myqed	
\end{proof}

Now, picking $h_d \in G\setminus\cU_{d-1}$, the consecutive use of  \autoref{lem:lowerPolys}
supplies us with elements $h_{d-1},\ldots,h_1$ that allow us to define conjunction-like polynomials
$\polyQ{k}_{\FittingIndex}(x_1, \dots, x_k)
=\polyQ{k}(h_{\FittingIndex-1},x_1, \dots, x_k, h_{\FittingIndex})$.
Note here that, since the terms $\polyQ{k}$ use iterated commutators ($\omegaG\cdot(k+2)$ times),
their sizes are exponential in $k$.
However, to get a conjunction on $n=k^{d-1}$ elements we first split these elements into
$k^{d-2}$ groups, each having $k$ elements.
If there were only two cosets of $\m G$ of $\cU_{d-1}$,
then applying to each such $k$ element group the polynomial $\polyQ{k}_d$ everything
would be send into $\cU_{d-2} \cup h_{d-1}\cdot\cU_{d-2}$. 
Now, we group the obtained $k^{d-2}$ values into $k^{d-3}$ groups, each of size $k$
and apply $\polyQ{k}_{d-1}$ to each such group.
Repeating this procedure we finally arrive into $\cU_1$ ensuring that the appropriate composition of the $\polyQ{k}_{\FittingIndex}$'s returns either the value $1$ or $h_1$.
One can easily notice that the size of such composed polynomial is
$2^{\Oh(k)} = 2^{\Oh(n^\frac{1}{d-1})}$.

Unfortunately, the behaviour of the $\polyQ{k}_d$'s and the entire long compositions can be controlled only on two cosets of $\cU_{d-1}$.
This requires $\abs{G/\cU_{d-1}} = 2$~-- which very seldom is the case.
Thus, the very top level requires a very careful treatment.
First, we replace the group $\m G$ with a smaller subgroup $\m G_0$ of the same Fitting length but such that $\m G_0$ is abelian over its $\cU_{d-1}$.
Then we find a normal subgroup  $\cU_{d-1}\leq H \Nle G_0$
so that we will be able to control the behaviour of the $\polyQ{k}$'s
on all cosets of $H$ in $\m G_0$.
The first step towards realizing this idea is described in the next observation.

\begin{lemma}
\label{lem:topQuotientabelian}
In each finite solvable group $\m G$ there is a subgroup $\m G_0$ satisfying:
\begin{itemize}
  \item $\m G_0$ is inducible,
  \item $\FitL(\m G_0)=\FitL(\m G)=d$, and
  \item $\m G_0/\cU_{d-1}(G_0)$ is abelian.
\end{itemize}
\end{lemma}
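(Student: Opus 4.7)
The natural candidate is $G_0 := \gamma_c(G)$, where $c$ denotes the nilpotency class of the nontrivial nilpotent quotient $\bar G := G/\cU_{d-1}(G)$. Since $\FitL(G) = d$, $\bar G$ is nontrivial and so $c \geq 1$ is well defined (and if $\bar G$ is already abelian then $c = 1$ and $G_0 = G$).

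The first step is to invoke the fact~-- foreshadowed at the end of the preliminaries~-- that $\gamma_c(G)$ is inducible. Concretely, I would concatenate sufficiently many copies (with disjoint variables) of the single polynomial $[x_1,\ldots,x_c]^{g}$; each copy attains $1$ under the substitution $x_1 \mapsto 1$, so a long enough such product realises every product of bounded length of values and inverses of values of $[x_1,\ldots,x_c]^{g}$. In a finite group this exhausts the whole normal subgroup generated by those values, namely $\gamma_c(G)$.

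Second, $\gamma_c(G)$ is characteristic in $G$, hence normal, so fact~(\ref{sec:prelims}.\ref{lem:Fitting}) applies and yields both $\cU_{d-1}(G_0) = \cU_{d-1}(G) \cap G_0$ and the equivalence $\FitL(G_0) = d$ iff $G_0 \not\leq \cU_{d-1}(G)$. The latter amounts to $\gamma_c(\bar G) \neq 1$, which is exactly how $c$ was chosen.

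Finally, the second isomorphism theorem gives
\[
G_0/\cU_{d-1}(G_0) \;\cong\; \gamma_c(G)\,\cU_{d-1}(G)/\cU_{d-1}(G) \;=\; \gamma_c(\bar G),
\]
and since $\bar G$ is nilpotent of class $c$ we have $[\gamma_c(\bar G),\bar G] = \gamma_{c+1}(\bar G) = 1$, so $\gamma_c(\bar G) \leq Z(\bar G)$ is abelian. The main obstacle is the inducibility claim: producing one polynomial whose image is exactly the verbal subgroup $\gamma_c(G)$; the Fitting-length and abelianness computations are then routine bookkeeping with the upper Fitting series and the definition of nilpotency class.
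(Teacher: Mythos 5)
Your proof is correct and essentially matches the paper's, which also takes $\m G_0$ to be the last term of the lower central series of $\m G$ that escapes $\cU_{d-1}(G)$ and reads off abelianness of the top quotient from nilpotency of $G/\cU_{d-1}(G)$. Just beware that the paper indexes $\gamma_0(G) = G$ (so your $\gamma_c(G)$ is its $\gamma_{m}(G)$ with $m = c-1$ maximal such that $\gamma_m(G) \not\leq \cU_{d-1}(G)$), and that for inducibility it inducts along the lower central series, passing from a polynomial $\po p$ with image $\gamma_j(G)$ to $\prod_{i=1}^{\abs{G}}[\po p(\o x^i), y_i]$ with image $\gamma_{j+1}(G)$, rather than assembling one long product of the single word $[x_1,\ldots,x_c]^g$~-- the same idea with tidier bookkeeping.
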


\begin{proof}
We simply set $\m G_0 = \gamma_m(G)$ where $m$ is maximal with $\gamma_m(G) \not\leq \cU_{d-1}(G)$.
This secures $\FitL(\m G_0)=d$.
To see that all groups $\gamma_j(G)$ in the lower central series
\[
G = \gamma_0(G) \geq \gamma_1(G) \geq \dots \geq \gamma_\omegaG(G)
\]
are inducible, we induct on $j$ and argue like in \cite[Lemma 5]{GoldmannR02}.
Let $\gamma_j(G)$ be the image of the polynomial $\po p(\o x)$.
Every element in $\gamma_{j+1}(G) = [\gamma_j(G), G]$ is a product of at most $\abs{G}$
elements of the form $[z,y]$, where $z$ ranges over $\gamma_j(G)$ and $y$ over entire $G$.
Thus, introducing new sequences of pairwise different variables $\o x^1,\ldots, \o x^{\abs{G}}$
we can produce $\gamma_{j+1}(G)$ as the image of the polynomial
$\prod_{i=1}^{\abs{G}} [\po p(\o x^i),y_i]$.

Finally, $\m G_0/\cU_{d-1}(G_0)$ is abelian as we have
$[G_0,G_0] = [\gamma_m(G),\gamma_m(G)] \leq [\gamma_m(G),G] =\gamma_{m+1}(G) \leq \cU_{d-1}$,
where the last inclusion is the consequence of the maximality of $m$.
\myqed	
\end{proof}

\newcommand{\gpK}{K}
\newcommand{\gpKm}{K_0}
\newcommand{\gpCen}{H}

From now on we simply change notation and replace our starting group $\m G$ by $\m G_0$,
or in other words we assume that $\m G/\cU_{d-1}(G)$ is abelian.
Now, to construct (and control) the promised normal subgroup $H$ first we pick  $\gpK \Nleq G$ among the minimal (with respect to inclusion) normal subgroups satisfying:
\begin{itemize}
	\item $[\gpK,G] = \gpK$ and
	\item $\FitL(\gpK) = d-1$.
\end{itemize}
Since $\nilRes{G}$ satisfies both above conditions, such $\gpK$ indeed exists.

\begin{senumerate}
\item
\label{lem:indecomposable}
$K$ is indecomposable, i.e. if $\gpK =\gpK_1\gpK_2$ for some $\gpK_1,\gpK_2 \Nleq G$ then $\gpK=\gpK_1$ or $\gpK=\gpK_2$.
\end{senumerate}

\begin{proof}
Suppose that $(\gpK_1,\gpK_2)$ is a minimal pair (coordinatewise) with $\gpK =\gpK_1\gpK_2$.
Since $\gpK = [\gpK,G] = [\gpK_1\gpK_2,G] = [\gpK_1,G] [\gpK_2,G]$
and $[\gpK_i,G] \leq \gpK_i$, we immediately get $[\gpK_i,G] = \gpK_i$ for both $i=1,2$.
Now if $\gpK_i< \gpK$, then minimality of $\gpK$ gives $\FitL(\gpK_i) \leq d-2$.
If this happens for both $\gpK_1$ and $\gpK_2$, then
$d-1 =\FitL(\gpK)= \FitL(\gpK_1\gpK_2) = \max\oneset{\FitL(\gpK_1),\FitL(\gpK_2)}\leq d-2$,
a contradiction.
\myqed
\end{proof}

By \sref{lem:indecomposable} we know that there exists the unique
$\gpKm \Nleq G$ with $\gpKm < \gpK$
and such that there is no normal subgroup of $\m G$ that lies strictly between $\gpKm$ and $\gpK$.

Note that, if $a \in \gpK\setminus \gpKm$, we cannot have $\Ncl{a}\leq \gpKm$. This gives
\begin{senumerate}
\item
\label{lem:principalNSGP}
For all $a \in \gpK\setminus \gpKm$ we have $\Ncl{a} = \gpK$.
\end{senumerate}

The other consequence of the fact that the solvable group $\m G$ has no normal subgroups
strictly between $\gpKm$ and $\gpK$ is the following.
\begin{senumerate}
\item
\label{lem:abelianQuotient}
$\gpK/\gpKm$ is abelian.
\end{senumerate}

We will also need:

\begin{senumerate}
\item
\label{lem:descent}
$\mcomm{\gpKm}{\omegaG}{G} \leq \cU_{d-2}(\gpK)$.
\end{senumerate}

\begin{proof}
By our choice of $\omegaG$, we have
$[\mcomm{\gpKm}{\omegaG}{G}, G ] = \mcomm{\gpKm}{\omegaG}{G}$.
Since $\mcomm{\gpKm}{\omegaG}{G}\leq \gpKm$ is strictly contained in $\gpK$
and $\gpK$ was chosen to be minimal with $[\gpK, G] = \gpK$ and $\FitL(\gpK) = d-1$,
we must have $\FitL(\mcomm{\gpKm}{\omegaG}{G}) \leq d-2$.
\myqed
\end{proof}

Now we are ready to define the normal subgroup $H$ of $\m G$.
We simply put $H$ to be the centralizer in $G$ of $\gpK$ modulo $\gpKm$,
i.e the largest normal subgroup with $[H,\gpK]\leq\gpKm$.
Then obviously $\gpCen = \set{g \in G}{[\gpK,g] \leq \gpKm}$.

\begin{senumerate}
\item
\label{lem:abelianCentralizerQuotient}
$\cU_{d-1} \leq \gpCen < G$.
In particular, $G/\gpCen$ is abelian.
\end{senumerate}

\begin{proof}
To see that $\gpCen < G$ suppose otherwise, i.e. $[\gpK,G] \leq \gpKm$.
This, however, contradicts our choice of $\gpK$ to satisfy $[\gpK,G] = \gpK$.

The first inclusion is simply equivalent to $[\gpK,\cU_{d-1}] \leq \gpKm$.
Indeed, since $\FitL(\gpK) = d-1$, we have $\mcomm{\gpK}{\omegaG}{\cU_{d-1}} \leq \nilRes{\cU_{d-1}} \leq \cU_{d-2}$ and, thus, $[\gpK,\cU_{d-1}]< \gpK$.
Since we assumed $G/\cU_{d-1}$ to be abelian, the second part of the statement follows.
\myqed
\end{proof}

Directly from our definitions, we know that $[x,y] \in \gpKm$ whenever $x\in \gpK$ and $y\in\gpCen$.
But the reason for our careful choice of $\gpK$ and then $\gpCen$ was to have a precise control
over the behaviour of $[x,y]$ for $y$ in other cosets of $\gpCen$ (and $x$ still in $\gpK$.)

Thus, for any $g \in G$ we define a map
$\phi_g: \gpK \to \gpK/\gpKm$ by $\phi_g(x) = [x,g]\cdot \gpKm$.
Since by \sref{lem:abelianQuotient}  is $\gpK/\gpKm$ is abelian,
using (\ref{sec:prelims}.\ref{lem:commutatorBasics}.\ref{basicCommId}), one can easily check that
$\phi_g$ is a group homomorphism for all $g \in G$.
Also we have $\phi_g(\gpKm) \leq \gpKm$, i.e. the kernel of this homomorphism contains $\gpKm$
so that $\phi_g$ actually induces a homomorphism $\gpK/\gpKm \to \gpK/\gpKm$.
We also write $\phi_g$ for this induced homomorphism.

\begin{senumerate}
\item
\label{lem:phiGiso}
If $g \in G \setminus \gpCen$, then $\phi_g: \gpK/\gpKm \to \gpK/\gpKm$ is an isomorphism. 
\end{senumerate}

\begin{proof}
We start with showing that for $g\in G$
\begin{align}
	\phi_g(x^b) = \phi_g(x)^b \label{eq:moduleHom}
\end{align}
whenever $x\in\gpK$ and $b\in G$.
Indeed, by \sref{lem:abelianCentralizerQuotient},
we can write $bg = h gb $ for some $h \in \gpCen$.
Then we have
\begin{align*}
	\phi_g(x^b) 	&= [x^b,g]\cdot \gpKm\\
	& = (x^b)^{-1} \, g^{-1}  b^{-1} x bg\, \cdot\gpKm\\
	& = (x^b)^{-1}\,  b^{-1} g^{-1}h^{-1} x hgb \cdot\gpKm\\
	& = (x^b)^{-1} \, b^{-1} g^{-1}x gb\cdot \gpKm \tag{since $h \in \gpCen$}\\
	& = (x^{-1} \,  g^{-1}x g)^b\cdot\gpKm \\
	&= \phi_g(x)^b.
\end{align*}

To see that the kernel of the original $\phi_g$ is $\gpKm$,
pick $a \in \gpK\setminus \gpKm$, so that, by \sref{lem:principalNSGP},
every element $x \in \gpK$ can be represented as $x= a^{g_1} \cdots a^{g_n}$
for some $g_1, \dots, g_n \in G$.
Now, if $\phi_g(a) =\gpKm$, then \eqref{eq:moduleHom} gives
$\phi_g(x) =\gpKm$ for all $x \in \gpK$.
This would however put $g$ into the centralizer $\gpCen$, contrary to our assumption.
\myqed
\end{proof}
	
Note that \eqref{eq:moduleHom} means that $\phi_g$ is not only a group homomorphism
but actually a homomorphism of $G$-modules.
Here $\gpK/\gpKm$ is a $G$-module under the action of $G$ on $\gpK/\gpKm$ via conjugation.
In terms of modules the proof of \sref{lem:phiGiso} is stated even easier:
The kernel of $\phi_g$ has to be a submodule of $ \gpK/\gpKm$.
However, by \sref{lem:principalNSGP}
$\gpK/\gpKm$ is generated, as a $G$-module, by any of its non-trivial elements.

\begin{remark} Notice, that for \sref{lem:phiGiso}, we need $G/H$ to be abelian.
Indeed, in general, if $N$ is a minimal (and, thus, indecomposable) normal subgroup with $[N,G] = N$, the map $N \to N$ defined by $x \mapsto[x,g]$ is \emph{not} necessarily bijective for all $g \not\in C_G(N)$. For instance take the semidirect product $(C_3 \times C_3) \rtimes D_4$ where $D_4 = \genr{a,b}{a^2=b^2=(ab)^4=1}$ is the dihedral group of order 8 and $a$ acts by exchanging the two components of $C_3 \times C_3$ and $b$ by inverting the second one. Then, $N = C_3 \times C_3$ is an indecomposable normal subgroup and $[N,G] = N$ but $a \not\in C_G(N)$ and $[(1,1),a] = [(2,2),a] = 1$, so  $x \mapsto[x,g]$ is not bijective on $N$ (here we use an additive notation for $C_3 = \oneset{0,1,2}$).
\end{remark}

We summarize our observations in the following claim.
\begin{senumerate}
\item
\label{lem:phiomega}
For all $x \in \gpK$ we have
\begin{align*}
\polyQstar1(x,y) \in
\begin{cases}
	x \gpKm,  &\text{if } y \not\in\gpCen,\\
	  \gpKm,  &\text{if } y \in\gpCen.
\end{cases}
\end{align*}
\end{senumerate}

\begin{proof}
Note first that $\omegaG$ was chosen to satisfy $\mcomm{x}{\omegaG}{y} = \mcomm{x}{2\omegaG}{y}$.
Moreover, for a fixed $g \in G$ the unary polynomial $\polyQstar1(x,g)$ acts on $\gpK$ as the
composition $\phi_g^\omegaG$ of $\phi_g$ with itself $\omegaG$ times.
Now, if $g \not\in \gpCen$, then \sref{lem:phiGiso} yields that $\phi_g^\omegaG$ is the identity
on the quotient $\gpK/\gpKm$.
Moreover, $\phi_g^\omegaG$ is constant $\gpKm$ for $g\in \gpCen$.
\myqed
\end{proof}

With claim \sref{lem:phiomega} we are ready to construct polynomials that will allow to code
coloring or \SAT at the very top level.

\begin{lemma}
\label{prop:nicePolynomials}
There is $h \in \gpK\setminus \cU_{d-2}$ and families of polynomials
	\begin{align*}
		&\polyAND^{(k)}(y_1, \dots, y_k) \qquad\qquad\qquad\qquad\qquad\qquad \text{ and }\\
		&\polySAT^{(k)}(y_{1,1},y_{1,2},y_{1,3} \dots, y_{k,1},y_{k,2},y_{k,3})
	\end{align*}	
	of length $2^{\Oh(k)}$ such that	
	\begin{align}
		\polyAND^{(k)}(\vec y)  &\in
        \begin{cases}
			h \cdot \cU_{d-2}, & \text{if } y_i \not\in \gpCen \text{ for all $i$},\\
			\hphantom{h\cdot{}}\cU_{d-2}, & \text{if } y_i \in \gpCen \text{ for some $i$},
		\end{cases}\label{eq:ColorPolyUpper}
\intertext{and}
		\polySAT^{(k)}(\vec y)  &\in
        \begin{cases}
			h \cdot \cU_{d-2}, & \text{if for all $i$ there is some $j$ with } y_{i,j} \in \gpCen,\\
			\hphantom{h\cdot{}}\cU_{d-2},
                & \text{if } y_{i,1},y_{i,2},y_{i,3} \not\in \gpCen \text{ for some $i$}.
        \end{cases}\label{eq:3SATPolyUpper}
\end{align}
\end{lemma}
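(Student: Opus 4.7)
My plan is to realize $\polyAND^{(k)}$ as the iterated commutator $\polyQstar{k+1}(h, y_1, \dots, y_k, w)$ with a carefully chosen final ``cleanup'' slot $w$, and then to build $\polySAT^{(k)}$ by inserting per-clause OR-gadgets into the same scheme.

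\textbf{Choice of witnesses.} Fix $w \in G \setminus \gpCen$. By the proof of \sref{lem:phiomega}, the map $x \mapsto \mcomm{x}{\omega}{w}$ is idempotent on $G$ thanks to the choice of $\omega$, so every element of its image is a fixed point of it. Since this map induces the identity on $\gpK/\gpKm$ and $[\gpK, G] = \gpK$, its image on $\gpK$ meets every coset of $\gpKm$ in $\gpK$, and in particular contains elements lying outside $\cU_{d-2}$ (otherwise the image on $\gpK$ would sit inside $\gpK \cap \cU_{d-2} \leq \gpKm$, which combined with the identity modulo $\gpKm$ would force $\gpK = \gpKm$). Pick $a \in \gpK$ with $\mcomm{a}{\omega}{w} \notin \cU_{d-2}$ and set $h := \mcomm{a}{\omega}{w}$; then $h \in \gpK \setminus \cU_{d-2}$ and the pointwise identity $\mcomm{h}{\omega}{w} = h$ holds in $G$.

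\textbf{AND polynomial.} Set $\polyAND^{(k)}(\vec y) := \polyQstar{k+1}(h, y_1, \dots, y_k, w)$; the length bound $\abs{\polyAND^{(k)}} = 2^{\Oh(k)}$ is immediate since each outer commutator step multiplies the previous length by at most a constant depending only on $G$. I would prove the claimed behaviour by induction on $k$ modulo $\cU_{d-2}$. In the positive case (all $y_i \notin \gpCen$), the inductive hypothesis together with \sref{lem:phiomega} and (\ref{sec:prelims}.\ref{lem:commutatorBasics}.\ref{iteratedCommId}) gives $\polyQstar{k}(h, \vec y) = h\, m\, u$ with $m \in \gpKm$ and $u \in \cU_{d-2}$; the final cleanup step then yields $\mcomm{h m u}{\omega}{w} \equiv \mcomm{h}{\omega}{w} = h$ modulo $\mcomm{\gpKm \cU_{d-2}}{\omega}{G}$, and the latter lies in $\cU_{d-2}$ by \sref{lem:descent} combined with the normality of $\cU_{d-2}$. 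In the negative case (some $y_j \in \gpCen$), after the first such $j$ the polynomial lies in $\gpKm \cU_{d-2}$ by \sref{lem:phiomega}; every subsequent commutator step, including the cleanup, then maps it into $\mcomm{\gpKm \cU_{d-2}}{\omega}{G} \leq \cU_{d-2}$.

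\textbf{SAT polynomial and the main obstacle.} For $\polySAT^{(k)}$ I would construct a constant-size clause gadget $\sigma(y_1, y_2, y_3)$ satisfying $\sigma \in \gpCen$ iff $y_1, y_2, y_3 \notin \gpCen$, and set $\polySAT^{(k)}(\vec y) := \polyAND^{(k)}(\sigma(\vec y_1), \dots, \sigma(\vec y_k))$; this inherits the length bound $2^{\Oh(k)}$ and the desired conjunction-of-3-disjunctions semantics, since the positive case of $\polyAND$ is then exactly ``every clause has some $y_{i,j} \in \gpCen$''. We may assume $\abs{G/\gpCen} = 2$ (the case $\abs{G/\gpCen} \geq 3$ is handled by the graph-colouring reduction alluded to in the outline), so that $\sigma$ must realise a Boolean NAND on the $\Z/2$-quotient $G/\gpCen$. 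This is the hardest point: since $G/\gpCen$ is itself abelian, no polynomial using only $G/\gpCen$-operations realises NAND, and $\sigma$ must be built from products twisted by a fixed $g \notin \gpCen$ together with short iterated commutators inside $G$ that exploit the non-abelian structure of $G$ surviving below $\gpCen$; the correctness of its $\cU_{d-2}$-coset behaviour would then be verified with the same machinery as for $\polyAND$.
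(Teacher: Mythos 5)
Your construction of $\polyAND^{(k)}$ is correct and essentially the same as the paper's: both take $\polyQ{k}(a,y_1,\dots,y_k,g)$ with a trailing ``cleanup'' slot filled by a fixed $g\notin\gpCen$, and both use \sref{lem:phiomega} to track the value modulo $\gpKm$ and \sref{lem:descent} together with (\ref{sec:prelims}.\ref{lem:commutatorBasics}.\ref{iteratedCommId}) to push the $\gpKm$-error into $\cU_{d-2}$ in the final step; that you normalize the seed to a fixed point $h=\mcomm{a}{\omegaG}{w}$ while the paper keeps a generic seed $a$ is a cosmetic difference.

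Your $\polySAT^{(k)}$, however, has a fatal gap: the clause gadget $\sigma(y_1,y_2,y_3)$ you posit cannot exist. Any polynomial $\sigma$ over $\m G$ reduces modulo $\gpCen$ to a polynomial over the \emph{abelian} group $G/\gpCen$ (this is exactly why \sref{lem:abelianCentralizerQuotient} was proved), and a polynomial over an abelian group computes an affine map $\bar y\mapsto c+\sum e_i\bar y_i$. Hence whether $\sigma(\vec y)\in\gpCen$ is determined by an affine function of the images $\bar y_i\in G/\gpCen\cong\Z/2$. But your required condition ``$\sigma\in\gpCen$ iff $y_1,y_2,y_3\notin\gpCen$'' is the NAND predicate, whose zero set is a single point of $(\Z/2)^3$ — not a coset of a subgroup and so not realized by any affine map. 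Your own remark that $\sigma$ ``must exploit the non-abelian structure of $G$ surviving below $\gpCen$'' cannot rescue this, because membership in $\gpCen$ only sees the value modulo $\gpCen$, where no non-abelian structure survives. The paper avoids this obstruction by never returning the clause value to a coset of $\gpCen$: it defines the OR-gadget $\polyD(x,y_1,y_2,y_3)=x\cdot\polyQstar3(x,y_1,y_2,y_3)^{-1}$, which takes the running accumulator $x\in\gpK$ as an extra argument and outputs into $\gpK$, encoding ``clause satisfied'' as $x\gpKm$ and ``clause falsified'' as $\gpKm$. Chaining $\polyD$ over the clauses (so that each output feeds the next $x$-slot) gives $\tilde\polySAT^{(k)}\in a\gpKm$ or $\gpKm$, and a final $\mcomm{\cdot}{\omegaG}{g}$ cleans up exactly as in the $\polyAND$ case. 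In other words the OR must be performed as a stateful fold inside $\gpK$ (distinguishing cosets of $\gpKm$, not of $\gpCen$), not as an independent per-clause gadget feeding into $\polyAND^{(k)}$.
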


\begin{proof}
First, we use \sref{lem:phiomega} and induct on $k$
in order to see that for all $a \in \gpK \setminus \gpKm$ we have
\begin{align*}
	\polyQstar{k}(a,y_1, \dots, y_k) \in
	\begin{cases}
		a \gpKm,  &\text{if } y_i \not\in\gpCen \text{ for all $i$},\\
		  \gpKm,  &\text{if } y_i \in\gpCen \text{ for some $i$}.
	\end{cases}
\end{align*}
Now we fix some arbitrary $a \in \gpK \setminus \gpKm$ and $g \in G \setminus \gpCen$.
Then obviously also $h = \mcomm{a}{\omegaG}{g}\in \gpK \setminus \gpKm$.
Actually $h\not\in \cU_{d-2}$, as otherwise
$h\in \cU_{d-2} \cap \gpK \leq \gpKm$.

Now, by \sref{lem:descent} we know that $M \coloneqq \mcomm{\gpKm}{\omegaG}{G}\leq \cU_{d-2}(\gpK)$. By (\ref{sec:prelims}.\ref{lem:commutatorBasics}.\ref{iteratedCommId}) it follows that
\begin{align*}
	\polyQ{k}(a,y_1, \dots, y_k,g) \in \begin{cases}
		h M, &\text{if } y_i \not\in\gpCen \text{ for all $i$},\\
		M,  &\text{if } y_i \in\gpCen \text{ for some $i$}.
	\end{cases}
\end{align*}
Thus, $\polyAND^{(k)}(y_1, \dots, y_k) = \polyQ{k}(a,y_1, \dots, y_k,g) $ satisfies (\ref{eq:ColorPolyUpper}). Clearly, its length is in $2^{\Oh(k)}$.

\bigskip
To construct the polynomials $\polySAT^{(k)}$, we first define
\[
\polyD(x,y_1, y_2,y_3) = x\cdot \polyQstar3(x, y_1,y_2,y_3)^{-1}.
\]
Then for all $x \in K$, by \sref{lem:phiomega}, we have
\begin{align*}
	\polyD(x,y_1, y_2,y_3) \in
    \begin{cases}
	      \gpKm, &\text{if } y_j \not\in\gpCen \text{ for all $j$},\\
		x \gpKm, &\text{if } y_j \in\gpCen \text{ for some $j$}.
	\end{cases}
\end{align*}
Now, with $a,g,h$ and $M$ as above, we proceed as with the $\polyAND^{(k)}$'s to define
%
%
\begin{align*}
\tilde\polySAT^{(k)}(\o y)
&= \polyD(\cdots \polyD(a, y_{1,1},y_{1,2},y_{1,3}),\dots ,y_{k,1},y_{k,2},y_{k,3})
\intertext{and}
\polySAT^{(k)}(\o y) &= \mcomm{\vphantom{k^k}\tilde\polySAT^{(k)}(\o y)}{\omegaG}{g}.
\end{align*}
As previously, \eqref{eq:3SATPolyUpper} follows from (\ref{sec:prelims}.\ref{lem:commutatorBasics}.\ref{iteratedCommId}).
\myqed
\end{proof}

Our next claim summarizes \autoref{lem:lowerPolys} and \autoref{prop:nicePolynomials}.

\begin{lemma}
\label{lem:finalPolynomials}
For $1 \leq \FittingIndex \leq d-1$ there are elements $h_{\FittingIndex}\neq 1$
and families of polynomials
\begin{align*}
	 	&\polyAND_{\FittingIndex}^{(m)}(y_1, \dots, y_m)
                \qquad\qquad\qquad\qquad\qquad\qquad \text{ and }\\
	 	&\polySAT_{\FittingIndex}^{(m)}(y_{1,1},y_{1,2},y_{1,3} \dots, y_{m,1},y_{m,2},y_{m,3})
\end{align*}
of length $2^{\Oh(m^{\frac{1}{d-\FittingIndex}})}$ such that	
\begin{align*}
\polyAND_{\FittingIndex}^{(m)}(\vec y)  &\in
\begin{cases}
	h_{\FittingIndex}\cdot \cU_{\FittingIndex-1},
        & \text{if } y_i \not\in \gpCen \text{ for all $i$},\\
    \hphantom{h_{\FittingIndex}\cdot{}}\cU_{\FittingIndex-1},
        & \text{if } y_i \in \gpCen \text{ for some $i$},
\end{cases}
\intertext{and}
\polySAT_{\FittingIndex}^{(m)}(\vec y)  &\in
\begin{cases}
	h_{\FittingIndex}\cdot \cU_{\FittingIndex-1},
        & \text{if for all $i$ there is some $j$ with } y_{i,j} \in \gpCen,\\
	\hphantom{h_{\FittingIndex}\cdot{}}\cU_{\FittingIndex-1},
        & \text{if } y_{i,1},y_{i,2},y_{i,3} \not\in \gpCen \text{ for some $i$}.
\end{cases}
\end{align*}
\end{lemma}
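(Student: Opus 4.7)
The plan is to prove the lemma by downward induction on $\FittingIndex$, starting at $\FittingIndex=d-1$ and descending to $\FittingIndex=1$, with \autoref{prop:nicePolynomials} supplying the base case and \autoref{lem:lowerPolys} driving the inductive step.

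For the base case $\FittingIndex = d-1$, set $h_{d-1} = h$ and take $\polyAND_{d-1}^{(m)}$, $\polySAT_{d-1}^{(m)}$ to be exactly the polynomials produced by \autoref{prop:nicePolynomials}. Here $d-\FittingIndex = 1$, so the required length bound $2^{\Oh(m^{1/(d-\FittingIndex)})}$ collapses to $2^{\Oh(m)}$, and the output cosets $h \cdot \cU_{d-2}$ and $\cU_{d-2}$ are already in the desired form.

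For the inductive step, assume we have constructed $\polyAND_{\FittingIndex+1}^{(m')}$ and $\polySAT_{\FittingIndex+1}^{(m')}$ of length $2^{\Oh((m')^{1/(d-\FittingIndex-1)})}$ satisfying the stated output conditions, with the distinguished element $h_{\FittingIndex+1} \in \cU_{\FittingIndex+1} \setminus \cU_\FittingIndex$. Feeding $h_{\FittingIndex+1}$ into \autoref{lem:lowerPolys} yields $h_\FittingIndex \in \cU_\FittingIndex \setminus \cU_{\FittingIndex-1}$ together with a family of terms
\[
\polyQ{k}\bigl(h_\FittingIndex, z_1, \dots, z_k, h_{\FittingIndex+1}\bigr)
\]
that act as a $k$-ary conjunction on the cosets $h_{\FittingIndex+1}\cU_\FittingIndex$ (true) and $\cU_\FittingIndex$ (false), delivering a value in $h_\FittingIndex \cU_{\FittingIndex-1}$ or $\cU_{\FittingIndex-1}$ accordingly. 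Choose $k = \lceil m^{1/(d-\FittingIndex)} \rceil$ and $m' = \lceil m/k \rceil$, partition the $m$ variables (respectively $m$ clauses) into $k$ blocks of size at most $m'$ (resp.\ $3m'$), and define
\[
\polyAND_{\FittingIndex}^{(m)}(y_1,\dots,y_m) = \polyQ{k}\bigl(h_\FittingIndex,\; \polyAND_{\FittingIndex+1}^{(m')}(\o y^{(1)}),\; \dots,\; \polyAND_{\FittingIndex+1}^{(m')}(\o y^{(k)}),\; h_{\FittingIndex+1}\bigr),
\]
with $\polySAT_{\FittingIndex}^{(m)}$ constructed in exactly the same way by substituting $\polySAT_{\FittingIndex+1}^{(m')}$ in each slot. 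Correctness is immediate: if every $y_i$ lies outside $\gpCen$, each inner polynomial evaluates to an element of $h_{\FittingIndex+1}\cU_\FittingIndex$ by the inductive hypothesis, so every $z_i$ is ``true'' and \autoref{lem:lowerPolys} forces the result into $h_\FittingIndex \cU_{\FittingIndex-1}$; if some $y_i$ lies in $\gpCen$ (resp.\ some clause is falsified), the block containing it evaluates into $\cU_\FittingIndex$, and \autoref{lem:lowerPolys} pushes the outer composition into $\cU_{\FittingIndex-1}$.

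For the length bound, note that $\polyQ{k}$ itself has length $2^{\Oh(k)}$ and that each of its $k$ inner variables occurs at most $2^{\Oh(k)}$ times (each of the $\omegaG$-fold iterated commutators contributes a factor of $2^\omegaG$), so substituting polynomials of length $s$ yields total length $2^{\Oh(k)} \cdot s$. By the inductive hypothesis, $s = 2^{\Oh((m/k)^{1/(d-\FittingIndex-1)})}$, and the choice $k = m^{1/(d-\FittingIndex)}$ balances $k^{d-\FittingIndex-1} = m/k$, making both exponents equal to $m^{1/(d-\FittingIndex)}$; the total length therefore stays within $2^{\Oh(m^{1/(d-\FittingIndex)})}$, and the induction carries through. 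The only genuinely delicate point is this balancing of block size against outer arity: without the precise choice of $k$, the exponential blow-up of $\polyQ{k}$ would compound across the $d-1$ levels and destroy the subexponential bound. Everything else is essentially routine composition of the two lemmas already established.
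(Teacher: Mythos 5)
Your proposal is correct and matches the paper's proof essentially step for step: the same downward induction from $\FittingIndex=d-1$ to $\FittingIndex=1$ with \autoref{prop:nicePolynomials} as the base case, the same inductive step composing $\polyQ{k}(h_{\FittingIndex},\cdot,\ldots,\cdot,h_{\FittingIndex+1})$ from \autoref{lem:lowerPolys} with $k$ copies of $\polyAND_{\FittingIndex+1}^{(\ell)}$ (resp. $\polySAT_{\FittingIndex+1}^{(\ell)}$), and the same parameter balancing $k=\ceil{m^{1/(d-\FittingIndex)}}$, $\ell=\ceil{m/k}$ to keep the length at $2^{\Oh(m^{1/(d-\FittingIndex)})}$.
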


\begin{proof}
We induct downwards on $\FittingIndex=d-1,\ldots,2,1$.
To start with we refer to \autoref{prop:nicePolynomials}
to set $h_{d-1} = h$ while
$\polyAND_{d-1}^{(m)}(\vec y)=\polyAND^{(m)}(\vec y)$
and $\polySAT_{d-1}^{(m)}(\vec y)=\polySAT^{(m)}(\vec y)$.

Now let $\FittingIndex<d-1$ and set $k = \ceil{\sqrt[d-\FittingIndex]{m}}$
and $\ell = \ceil{\frac{m}{k}}$.
By possibly duplicating some of the variables we may assume that $m = k\ell$.

To define $\polyAND_{\FittingIndex}^{(m)}(\o y)=\polyAND_{\FittingIndex}^{(m)}(y_1, \dots, y_m)$
we first refer to \autoref{lem:lowerPolys} to get $h_{\FittingIndex}$ from $h_{\FittingIndex+1}$
and then we set
\begin{align*}
\polyAND_{\FittingIndex}^{(m)}(\o y) &= \polyQ{k}\!\left(h_{\FittingIndex},\polyAND_{\FittingIndex+1}^{(\ell)}(y_1,\dots ,y_\ell), \dots, \polyAND_{\FittingIndex+1}^{(\ell)}(y_{m-\ell+1},\dots ,y_m), h_{\FittingIndex+1}\right),
\end{align*}
where the polynomial $\polyAND_{\FittingIndex+1}^{(\ell)}$ is supplied by the induction hypothesis.
From \autoref{lem:lowerPolys} it should be clear that $\polyAND_{\FittingIndex}^{(m)}$
satisfies the condition claimed for it.

Also its length can be bounded inductively.
Substituting to the polynomial
$\polyQ{k}(h_{\FittingIndex}, x_1, \dots, x_k,h_{\FittingIndex+1})$
of length $2^{\Oh(k)}$ (by \autoref{lem:lowerPolys})
the $k= m^{\frac{1}{d-\FittingIndex}}$ copies of the polynomial
$\polyAND_{\FittingIndex+1}^{(\ell)}$ of length
$2^{\Oh\bigl(\ell^{\frac{1}{d-\FittingIndex-1}}\bigr)}$
and using $\ell = m^{\frac{d-\FittingIndex-1}{d-\FittingIndex}}$
we arrive at the following bound for the length of $\polyAND_{\FittingIndex}^{(m)}$
\begin{align*}
2^{\Oh(k)}\cdot 2^{\Oh(\ell^{\frac{1}{d-\FittingIndex-1}})} &= 2^{\Oh\left(m^{\frac{1}{d-\FittingIndex}} +   \left(m^{\frac{d-\FittingIndex-1}{d-\FittingIndex}}\right)^{\frac{1}{d-\FittingIndex-1}}\right)}\\
	&=2^{\Oh\left(m^{\frac{1}{d-\FittingIndex}} +   m^{\frac{d-\FittingIndex-1}{d-\FittingIndex}\cdot \frac{1}{d-\FittingIndex-1}}\right)}\\
	&=2^{\Oh\left(m^{\frac{1}{d-\FittingIndex}}\right)}.
\end{align*}

In a very similar  way we produce $\polySAT_{\FittingIndex}^{(m)}(\vec y)$
from the $\polySAT_{\FittingIndex+1}^{(\ell)}$'s
by simply putting
\begin{align*}
\polySAT_{\FittingIndex}^{(m)}(\vec y) &= \polyQ{k}\big(h_{\FittingIndex},\polySAT_{\FittingIndex+1}^{(\ell)}(y_{1,1},y_{1,2},y_{1,3}, \dots, y_{\ell,1},y_{\ell,2},y_{\ell,3}),\ldots\\ & \hspace{-1em}\dots, \polySAT_{\FittingIndex+1}^{(\ell)}(y_{m-\ell+1,1},y_{m-\ell+1,2},y_{m-\ell+1,3}, \dots, y_{m,1},y_{m,2},y_{m,3}),h_{\FittingIndex+1}\big).\tag*{\myqed}	
\end{align*}
\end{proof}

\newcommand{\polyGraph}{\polyAND_{\mbox{{\kern-.12em\fontsize{6}{6}\selectfont $\Gamma$}}}}

Now we are ready do conclude our proof of Theorem \ref{thm:mainIntro}.
Recall that due to \autoref{lem:topQuotientabelian} we are working in the group $\m G$
in which $G/\cU_{d-1}G$ is abelian.
We are going to reduce \SAT or \KColoring{C} to $\polsat{G}$ and $\poleqv{G}$
depending on whether $C=\abs{G/\gpCen}>2$ or not.
In either case the reduction from \KColoring{C} to $\polsat{G}$ and $\poleqv{G}$ works; however, the case $C=2$ has to be treated
in a different way since $\KColoring{2}$ is decidable in polynomial time.

In our reduction the formula $\Phi$ from \SAT (or a graph $\Gamma$ from \KColoring{C})
is transformed to a polynomial $\polySAT_\Phi$ (or $\polyGraph$)
and a group element $h_1$ so that the following will hold:
\begin{enumerate}[(A)]
\item \label{pointA}
the length of $\polySAT_\Phi$ (resp.\ $\polyGraph$) is in $2^{\Oh(\sqrt[d-1]{m})}$
where $m$ is the number of clauses (resp.\ the number of edges),
\item \label{pointB}
$\polySAT_\Phi$ (resp.\ $\polyGraph$) can be computed in time $2^{\Oh(\sqrt[d-1]{m})}$ (\ie polynomial in the length of $\polySAT_\Phi$ (resp.\ $\polyGraph$)), 
\item \label{pointC} if $\Phi$ is satisfiable (resp.\ $\Gamma$ has a valid $C$-coloring), then
$\polySAT_\Phi = h_1$  (resp.\ $\polyGraph =  h_1$)  is satisfiable, and,
\item \label{pointD}
if $\Phi$ is \emph{not} satisfiable (resp.\ $\Gamma$ does \emph{not} have a valid $C$-coloring), then
$\polySAT_\Phi = 1$  (resp.\ $\polyGraph = 1$) holds under \emph{all} evaluations.
\end{enumerate}
The latter two points imply that $\polySAT_\Phi = h_1$  (resp.\ $\polyGraph =  h_1$)  is satisfiable
if and only if $\Phi$ is satisfiable (resp.\ $\Gamma$ has a valid $C$-coloring) and
$\polySAT_\Phi = 1$  (resp.\ $\polyGraph = 1$) holds identically in  $\m G$ 
if and only if $\Phi$ is \emph{not} satisfiable (resp.\ $\Gamma$ does \emph{not} have a valid $C$-coloring).

Now, if $\ell$ denotes the input length for \polsat{} or \poleqv{}
(i.e. the size of $\polySAT_\Phi$ or $\polyGraph$),
then an algorithm for \polsat{} or \poleqv{} working in $2^{o(\log^{d-1}\ell)}$-time
would solve \SAT (resp.\ \KColoring{C}) in time
\[
2^{\Oh(\sqrt[d-1]{m})} + 2^{o(\log^{d-1}(2^{\sqrt[d-1]{m}}))} = 2^{o(m)},
\]
contradicting ETH.

We start with describing the reduction from \KColoring{C} to $\polsat{G}$ and $\poleqv{G}$ where $C=\abs{G/\gpCen}$. The quotient $\abs{G/\gpCen}$
serves as the set of colors.
For a graph $\Gamma = (V,E) $ with $E \sse\binom{V}{2}$, $\abs{V} = n$ and $\abs{E} = m$,
we use variables $x_v$ for $v \in V$. For an edge $\{u,v\}\in E$ the value of $x_{u}x_{v}^{-1}$ (modulo $\gpCen$) decides whether the vertices $u,v$ have the same color.
To control whether the coloring of $\Gamma$ is proper we define the polynomial $\polyGraph$ by putting
\[
\polyGraph((x_v)_{v\in V}) = \polyAND_1^{(m)}\!\!\left((x_{u}x_{v}^{-1})_{\{u,v\}\in E}\right)
\]
where $\polyAND_1^{(m)}$ and $h_1$ are supplied by \autoref{lem:finalPolynomials}~-- and, thus, meet the length bound (\ref{pointA}). Point (\ref{pointB}) is clear from the definition of the polynomial.
Notice that the edges can be fed into $\polyAND_1^{(m)}$
in any order without affecting the final value of such polynomials.
Every evaluation of the variables $x_v$ by elements of $\m G$
defines a coloring $\chi : V \to G/\gpCen$ in a natural way.
If this coloring is valid (i.e. $\chi(u) \not\equiv \chi(v)\mod \gpCen$ for every edge $\{u,v\}\in E$),
then all the expressions $\chi(u) \chi(v)^{-1}$ are not in $\gpCen$ and \autoref{lem:finalPolynomials}
ensures us that $\polyGraph((x_v)_{v\in V}) = h_1$. This shows (\ref{pointC}). 

Conversely, by \autoref{lem:finalPolynomials}, for every evaluation of the $x_v$'s by elements of $\m G$
that does not satisfy the equation $\polyGraph((x_v)_{v\in V}) = 1$, we have $x_u x_v^{-1} \not\in \gpCen$ for all edges $\{u,v\}$. This obviously yields a valid coloring of $\Gamma$~-- hence, it proves (\ref{pointD}).

\newcommand{\polyClause}{\po{c}}
\medskip
As \KColoring{2} is solvable in polynomial time in the case $\abs{G/\gpCen} =2$,
we interpret \SAT and use the two cosets of $\gpCen$ in $\m G$ as the true/false boolean values.
We start with the formula
\[
\Phi = (A_{1,1} \lor A_{1,2}\lor A_{1,3}) \land \cdots \land (A_{m,1} \lor A_{m,2}\lor A_{m,3}),
\]
where each literal $A_{i,j}$ is either one of the boolean variables $X_1, \dots, X_n$ or its negation.
First, we transform the literals $A_{i,j}$ into the expressions $x_{i,j}$ that are supposed to range over $\m G$ by picking $g\in G\setminus\gpCen$ and then setting
\[
x_{i,j} =
\begin{cases}
	gx_k, &\text{if } A_{i,j} = X_k,\\
	x_k, &\text{if } A_{i,j} = \lnot X_k.
\end{cases}
\]
Finally, we set
\begin{align*}
\polySAT_\Phi(x_1,\ldots,x_n)
&= \polySAT_1^{(m)}\!\left(x_{1,1},x_{1,2},x_{1,3}, \dots,x_{m,1},x_{m,2},x_{m,3}\right)
\end{align*}
where again $\polySAT_1^{(m)}$ is supplied by  \autoref{lem:finalPolynomials}.

Now, given an assignment to the boolean variables $X_1, \dots, X_n$, we obtain an assignment for $x_1, \dots, x_n$ by setting $x_i= g$ if $X_i$ is \emph{true} and $x_i= 1$ if $X_i$ is \emph{false}. It can be easily checked using \autoref{lem:finalPolynomials} that the original assignment was satisfying for $\Phi$ if and only if $\polySAT_\Phi(\vec x) = h_1$ is satisfied (notice that $g^2 \in H$). This shows (\ref{pointC}).
On the other hand, if $\polySAT_\Phi(\vec x) \neq 1$, then, by \autoref{lem:finalPolynomials}, for all $i$ there is some $j$ with  $x_{i,j} \in H$. Hence, if we assign \emph{true} to $X_k$ if and only if $x_k \not\in H$, we obtain a satisfying assignment for $\Phi$~-- proving (\ref{pointD}).
\myqed

\section{Conclusion}

With \autoref{thm:mainIntro} in mind, one could suspect that finite solvable groups of Fitting length 2 have polynomial time algorithms for $\polsat{}$.
As we have already mentioned, the very recent paper \cite{FoldvariH19} shows that $\polsat{}$ is in \ptime for many such groups, in particular, for all semidirect products $\m G_p \rtimes \m A$, where $\m G_p$ is a $p$-group and $\m A$ is abelian.
This, however, does not cover e.g. the dihedral group $\m D_{15}$.
In fact, in \cite{IdziakKKD15} $\polsat{D_{15}}$ is shown not to be in \ptime, unless ETH fails.
On the other hand, $\poleqv{D_{15}}\in \ptime$.
Actually from \cite{FoldvariH19} we know that $\poleqv{G} \in \ptime$ for each semidirect product
$\m G = \m N \rtimes \m A$ where $\m N$ is nilpotent and $\m A$ is abelian.
In fact, $\m D_{15}$ is the first known example of a group with polynomial time \poleqv{} and non-polynomial (under ETH) \polsat{}.
The converse situation cannot happen
as, for a group $\m G$, $\polsat{G}\in \ptime$ implies $\poleqv{G}\in \ptime$.
Indeed, to confirm that $\po t(\o x)=1$ holds for all possible values of the $\vec x$'s, we check that for no $g\in G \setminus \{1\}$ the equation $\po t(\vec x)=g$ has a solution.

We conclude our paper with two obvious questions.

\begin{problem}
	Characterize finite solvable groups (of Fitting length 2) with \polsat{}  decidable in polynomial time.
\end{problem}

\begin{problem}
	Characterize finite solvable groups (of Fitting length 2) with \poleqv{}  decidable in polynomial time.
\end{problem}

Finally, we want to point out the consequences of our main result to another problem: For a finitely generated (but possibly infinite) group with a finite set of generators $\Sigma$ the power word problem is as follows: The input is a tuple $(p_1, x_1, p_2, x_2, \ldots, p_n, x_n)$ where the $p_i$ are words over $\Sigma$ and the $x_i$ are integers encoded in binary. The question is whether $p_1^{x_1} \cdots p_n^{x_n}$ evaluates to the identity of the group. The complexity of the power word problem in a wreath product $\m G \wr \Z$ where $\m G$ is a finite group has a similar behaviour as \poleqv{}: if $\m G$ is nilpotent, the power word problem of $\m G \wr \Z$ is in polynomial time \cite{FigeliusGLZ20} (actually even in \TC) and, if $\m G$ is non-solvable, it is \coNP-complete \cite{LohreyW19}. Indeed,
in \cite{FigeliusGLZ20} a surprising connection to \poleqv{} has been pointed out: if $\m G$ is a finite group, then \poleqv{G} can be reduced in polynomial time to the power word problem of the wreath product $\m G \wr \Z$. In particular, \autoref{thm:mainIntro} implies that the power word problem of $G \wr \Z$ where $G$ is a finite solvable group of Fitting length at least three is not in \P assuming ETH.

\bibliographystyle{spmpsci}
\bibliography{equations}

\end{document}